\documentclass[12pt,a4paper]{article}

\usepackage[english]{babel}

\usepackage[a4paper,top=2cm,bottom=2cm,left=2.5cm,right=2.5cm,marginparwidth=1.75cm]{geometry}

\usepackage[style=authoryear-comp, backend=biber, sortcites=false]{biblatex}
\usepackage{amsmath}
\allowdisplaybreaks
\usepackage{amssymb}
\usepackage{amsthm}
\usepackage{graphicx}
\usepackage[colorlinks=true, allcolors=blue]{hyperref}
\usepackage[title]{appendix}
\usepackage{mathrsfs}
\usepackage{amsfonts}
\usepackage{caption}
\usepackage{authblk}
\usepackage{csquotes}
\usepackage[T1]{fontenc} 
\usepackage{lipsum}
\usepackage{float}

\usepackage{setspace}
\usepackage{titlesec}
\titleformat{\section} % Redefine section numbering format
  {\normalfont\Large\bfseries}{\thesection.}{1em}{}
  
\theoremstyle{plain}
\newtheorem{theorem}{Theorem}
\newtheorem{lemma}[theorem]{Lemma}
\newtheorem{proposition}[theorem]{Proposition}
\newtheorem{corollary}[theorem]{Corollary}

\newtheorem{assumption}{Assumption}

\theoremstyle{definition}
\newtheorem{definition}{Definition}
\newtheorem{example}{Example}

\title{Deep Projections and the Local Nature of the Cass Criterion}

\author[1,2]{Leandro Lyra Braga Dognini}
\affil[1]{\small Department of Economics, Rio de Janeiro State University}
\affil[2]{\small Legislative Advisory, Federal Senate of Brazil}
\date{June 14, 2026} % Remove date

\begin{document}
\maketitle
\begin{abstract}
\noindent This paper defines the deep projections of the indifference and the offer hypersurfaces. These projections are used to measure how much the trade hyperplane must be curved to reach these other two canonical manifolds. In particular, it is shown that the factor $\lambda_{n}(p)\partial e(p,v_{n}(p))/\partial u>0$ measures how much more bent the offer hypersurface is relative to the indifference hypersurface, where $\lambda_{n}(\cdot)$ is the Lagrange multiplier and $v_{n}(\cdot)$ is the indirect utility associated with the normalized Walrasian demand $x_{n}(\cdot)$. These definitions and results are then applied to a consumption-loan overlapping generations economy to provide general statements for the sufficiency and necessity of the Cass criterion based on $\sum^{\infty}_{t=1}1/(\Vert p_{t}\Vert\sum_{h\in G_{t}}\Vert c^{h}_{t}\Vert)=\infty$ (thus allowing unbounded dynamics for both the demography and per capita endowments) under assumptions that reveal its local nature. 
\end{abstract}

\textbf{Keywords}: Consumer Theory, General Equilibrium Theory, Overlapping Generations Economies, Differential Topology. 

\textbf{JEL}: D11, D50.

%-------------------------------------------
% Paper Body
%-------------------------------------------
%--- Section ---%
\section{Introduction}\label{sec1}

\textsc{The canonical manifolds} of consumer theory are the trade hyperplane, the indifference hypersurface, and the offer hypersurface. A natural and intuitive graphical description of the \textit{local} behavior of these manifolds is the following: the trade hyperplane is the supporting hyperplane of both the indifference and the offer hypersurfaces, which are the boundaries of strictly convex sets that are strictly included one within the other.

With this depiction in mind, this paper shows that the indifference and offer hypersurfaces can be seen as curved versions of the trade hyperplane and defines the \textit{deep projections} in order to quantify how bent the trade hyperplane must be to reach these other two manifolds. In particular, it is shown that the factor $\lambda_{n}(p)\partial e(p,v_{n}(p))/\partial u>0$ measures how much more bent the offer hypersurface is relative to the indifference hypersurface, where $\lambda_{n}(\cdot)$ is the Lagrange multiplier and $v_{n}(\cdot)$ is the indirect utility associated with the normalized Walrasian demand $x_{n}(\cdot)$ \parencite{Dognini_2025b}.

These projections serve as a complementary analytical tool for other well-known measures of curvature, as the Gaussian one (e.g., \textcite[pp. 612--613]{Debreu_1972} and \textcite[pp. 39-40]{Mas-Colell_1985}), and an immediate application of them provides general statements for the necessity and sufficiency of the Cass criterion in consumption-loan overlapping generations economies \parencite{Samuelson_1958}.

It is well-known that in these overlapping generations economies all equilibria are weakly Pareto optimal \parencite[p. 286]{BalaskoShell_1980}. However, the First Welfare Theorem fails and they may be inefficient (e.g., \textcite{Samuelson_1958, CassYaari_1966, Shell_1971}). Therefore, following \textcite{Cass_1972}, the literature has derived necessary and sufficient conditions to characterize efficient allocations (e.g., \textcite{Benveniste1976,Benveniste_1986,BalaskoShell_1980,OkunoZilcha_1980, GeanakoplosPolemarchakis_1991}) that are based on the asymptotic behavior of the supporting price sequence, in the form of
\begin{eqnarray*}
    \sum^{\infty}_{t=1}\frac{1}{\Vert p_{t}\Vert}=\infty.
\end{eqnarray*}

This statement of the Cass criterion assumes bounded dynamics for both the demography and per capita endowments. \textcite[pp. 1932--1936, Theorems 5a and 5b]{GeanakoplosPolemarchakis_1991} derive a more general version that allows for unbounded demographic dynamics, which is given by 
\begin{eqnarray*}
    \sum^{\infty}_{t=1}\frac{1}{H_{t}\Vert p_{t}\Vert}=\infty,
\end{eqnarray*}
with $H_{t}\in\mathbb{N}$ the number of households born at period $t\geq1$. This paper further generalizes the statement of the Cass criterion as
\begin{eqnarray}\label{eqGeneralCass}
    \sum^{\infty}_{t=1}\frac{1}{\Vert p_{t}\Vert \sum_{h\in G_{t}}\Vert c^{h}_{t}\Vert}=\infty,
\end{eqnarray}
thus allowing for unbounded dynamics for both the demography and per capita endowments. Reaching a more general statement for the Cass criterion is a relevant milestone for properly deploying overlapping generations models, which are used for a wide range of economic applications stemming from monetary theory (e.g., \textcite{Lucas_1972,Santos_1990, Dognini_2026a}) to the literature on bubbles (e.g., \textcite{Tirole_1985,SantosWoodford_1997,HiratoToda_2025}) and social security design (e.g., \textcite{Samuelson_1975,Dognini_2026b}). 

Furthermore, the assumptions that support the general statement (\ref{eqGeneralCass}) are presented ``on demand'' in a step-by-step proof that highlights either their \textit{pointwise} or \textit{local} nature. Pointwise assumptions are those that depend only on the allocation and the supporting price sequence, whereas local assumptions depend on the behavior of the indifference hypersurfaces in a neighborhood of each household allocation. These local assumptions thus reveal the \textit{local nature} of the Cass criterion and are all backed by results derived from the deep projections (see Theorems \ref{theoFundamentalIneq} and \ref{theoUpperBoundCass}).

After this \hyperref[sec1]{Introduction}, the outline of the paper is as follows. \hyperref[sec2]{Section 2} defines and characterizes the local behavior of the deep projections. \hyperref[sec3]{Section 3} derives the general statements for the sufficiency and necessity of the Cass criterion. Finally, a brief
discussion of the results is presented in the \hyperref[sec4]{Conclusion} 
and all omitted proofs are stated in the \hyperref[appx]{Appendix}.

%--- Section ---%

\section{The deep projections of the canonical manifolds}\label{sec2}

The consumption set\footnote{I follow \textcite{Mas-ColellWhinstonGreen_1995} for classical demand theory definitions and results.} is $X=\mathbb{R}^{L}_{+}$, $L\geq2$, and preferences are defined through a smooth, strictly increasing, and strictly quasi-concave utility function $u:\mathbb{R}^{L}_{+}\rightarrow\mathbb{R}$. The Walrasian demand function $x:\mathbb{R}^{L+1}_{++}\rightarrow \mathbb{R}^{L}_{+}$ is defined as
\begin{eqnarray}\label{UMP}
    x(p,w)=\arg\max_{c\in\mathbb{R}^{L}_{+}} u(c) \textrm{ s.t. } p\cdot c\leq w,
\end{eqnarray}
and I assume $x(p,w)\in\mathbb{R}^{L}_{++}$, $(p,w)\in\mathbb{R}^{L+1}_{++}$ (i.e., (\ref{UMP}) has no corner solutions\footnote{I preferred to state this assumption rather than impose conditions that ensure it. For instance, if the utility functions are strongly monotonic, satisfy a strong bordered Hessian condition, and the indifference curves are closed in $\mathbb{R}^{L}$, then this assumption is satisfied (e.g., \textcite[p. 2]{YuHosoya_2022}).}). I proceed with the following definition from \textcite{Dognini_2025b}.

\begin{definition}\label{defNormalizedWalrasian}
The \textit{normalized Walrasian demand} $x_{n}:\mathbb{R}^{L}_{++}\rightarrow \mathbb{R}^{L}_{++}$ is given by $x_{n}(p)=x(p,1)$.
\end{definition}

The normalized Walrasian demand is a diffeomorphism that represents the consumer demand when prices are stated as percentages of his wealth. Let $\lambda(p,w)>0$ be the Lagrange multiplier of (\ref{UMP}) and the indirect utility function $v:\mathbb{R}^{L+1}_{++}\rightarrow \mathbb{R}$ be given by $v(p,w)=u(x(p,w))$. Following Definition \ref{defNormalizedWalrasian}, let $\lambda_{n}(p)=\lambda(p,1)$ and $v_{n}(p)=u(x_{n}(p))$, $p\in\mathbb{R}^{L}_{++}$. 

The Hicksian demand function $h:\mathbb{R}^{L}_{++}\times\mathbb{R}\rightarrow\mathbb{R}^{L}_{+}$ is defined as
\begin{eqnarray*}
    h(p,u)=\arg\min_{c\in\mathbb{R}^{L}_{+}} p \cdot c \textrm{\ s.t.\ } u(c)\geq u,
\end{eqnarray*}
and I assume $h(p,u)\in\mathbb{R}^{L}_{++}$, $(p,u)\in \mathbb{R}^{L}_{++}\times \mathbb{R}$. The expenditure function $e:\mathbb{R}^{L}_{++}\times\mathbb{R}\rightarrow\mathbb{R}_{++}$ is $e(p,u)=p h(p,u)^{T}$. To ease notation, I write $\nabla e(p,u)=(\nabla_{1}e(p,u),\nabla_{2}e(p,u))\in\mathbb{R}^{L}_{++}\times\mathbb{R}_{++}$, $\mathbf{H}_{11}e(p,u)=\mathbf{J}_{1}\nabla_{1}e(p,u)\in\mathbb{R}^{L\times L}$ and $\mathbf{H}_{21}e(p,u)=\partial \nabla_{1}e(p,u)/\partial u\in\mathbb{R}^{L}$. The following result is taken from \textcite{Dognini_2025b} (see it for a detailed proof and discussion of these definitions and their relations to the \textit{consumption}, \textit{normalized} and \textit{flat domains}).

\begin{proposition}\label{propBasicIdentities}
Let $x_{n}(\cdot)$, $v_{n}(\cdot)$, $\lambda_{n}(\cdot)$, $h(\cdot)$ and $e(\cdot)$ be as defined above. Then, for $p,c\in\mathbb{R}^{L}_{++}$, $u\in\mathbb{R}$: (i) $\nabla_{1}e(p,u)=h(p,u)$ and $p\mathbf{H}_{11} e(p,u)=0$; (ii) $p \cdot x_{n}(p)=1$, $p\mathbf{J}x_{n}(p)=-x_{n}(p)$ and $\lambda_{n}(p)=\nabla u(x_{n}(p))\cdot x_{n}(p)$; (iii) $x_{n}(p)=\nabla v_{n}(p)/p\cdot \nabla v_{n}(p)$, $p\cdot \nabla v_{n}(p)=-\lambda_{n}(p)$, and $x_{n}^{-1}(c)=\nabla u(c)/c\cdot \nabla u(c)$; and, (iv) $x_{n}(p)=h(p,v_{n}(p))$, $h(p,u)=x_{n}(p/e(p,u))$ and $e(p,v_{n}(p))=1$.
\end{proposition}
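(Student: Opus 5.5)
The plan is to derive all four groups of identities from the first-order conditions of the utility maximization problem (\ref{UMP}) and the expenditure minimization problem, together with the homogeneity properties of $x(\cdot)$, $v(\cdot)$ and $e(\cdot)$. Since there are no corner solutions, the first-order conditions hold with equality. Smoothness of $x$, $h$, $v$ and $e$ follows from the implicit function theorem applied to these conditions under the bordered Hessian regularity implicit in the setting, so differentiating identities is legitimate.

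\textbf{Step 1, item (i).} I would use the envelope theorem (Shephard's lemma) to get $\nabla_{1}e(p,u)=h(p,u)$. Since $e$ is homogeneous of degree one in $p$, $\nabla_{1}e$ is homogeneous of degree zero. Euler's theorem applied to each component of $h$ then gives $p\mathbf{H}_{11}e(p,u)=0$, using the symmetry of $\mathbf{H}_{11}e$.

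\textbf{Step 2, item (ii).} Walras' law under strict monotonicity gives $p\cdot x(p,w)=w$, hence $p\cdot x_{n}(p)=1$. Differentiating this identity in $p$ gives $x_{n}(p)^{T}+p\mathbf{J}x_{n}(p)=0$, which is the second identity. The first-order condition at $w=1$ is $\nabla u(x_{n}(p))=\lambda_{n}(p)p$. Dotting it with $x_{n}(p)$ and using $p\cdot x_{n}(p)=1$ yields $\lambda_{n}(p)=\nabla u(x_{n}(p))\cdot x_{n}(p)$.

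\textbf{Step 3, item (iii).} The chain rule gives
\[
\nabla v_{n}(p)=\nabla u(x_{n}(p))\mathbf{J}x_{n}(p)=\lambda_{n}(p)\,p\mathbf{J}x_{n}(p)=-\lambda_{n}(p)x_{n}(p).
\]
Dotting with $p$ gives $p\cdot\nabla v_{n}(p)=-\lambda_{n}(p)$. Dividing the previous display by this scalar gives $x_{n}(p)=\nabla v_{n}(p)/p\cdot\nabla v_{n}(p)$; this is Roy's identity in normalized form.

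For the inverse, I would take $c\in\mathbb{R}^{L}_{++}$. Strict quasi-concavity and the interiority assumption make $c$ the unique demand at the price $p=\nabla u(c)/(c\cdot\nabla u(c))$ with wealth $1$. Indeed, at this price $p\cdot c=1$, and $\nabla u(c)$ is proportional to $p$, so the sufficient first-order conditions hold. This shows that $x_{n}$ is onto and gives the formula for $x_{n}^{-1}$. Injectivity holds because the first-order conditions determine $p$ up to the budget normalization.

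\textbf{Step 4, item (iv).} Duality between the two problems gives $h(p,v(p,w))=x(p,w)$ and $e(p,v(p,w))=w$; at $w=1$ these read $x_{n}(p)=h(p,v_{n}(p))$ and $e(p,v_{n}(p))=1$. For the remaining identity, homogeneity of degree zero of $x$ and the identity $x(p,e(p,u))=h(p,u)$ give
\[
h(p,u)=x(p,e(p,u))=x\bigl(p/e(p,u),1\bigr)=x_{n}\bigl(p/e(p,u)\bigr).
\]

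\textbf{Main obstacle.} I expect the delicate point to be the duality step in item (iv), and its use for $x_{n}^{-1}$. It requires that the expenditure minimization solution at utility $v(p,w)$ coincides with the utility maximization solution. This rests on strict monotonicity, so the budget binds and no cheaper bundle attains the same utility, and on continuity of $u$. I would first prove $e(p,v(p,w))=w$ by contradiction, perturbing a cheaper bundle upward using strict monotonicity, and then invoke uniqueness from strict quasi-concavity.
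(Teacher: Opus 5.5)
The paper gives no proof of this proposition here: it imports it from \textcite{Dognini_2025b}, so there is no in-text argument to compare against. Your derivation is the standard one and is correct under the paper's standing regularity:
\begin{itemize}
\item for (i), Shephard's lemma together with Euler's theorem and symmetry of $\mathbf{H}_{11}e$;
\item for (ii), Walras' law and the interior Lagrange condition $\nabla u(x_n(p))=\lambda_n(p)p$;
\item for (iii), the chain rule $\nabla v_n(p)=\nabla u(x_n(p))\mathbf{J}x_n(p)=-\lambda_n(p)x_n(p)$, which neatly extracts the normalized Roy identity from (ii);
\item for (iv), UMP/EMP duality with degree-zero homogeneity of $x$.
\end{itemize}
Citing the earlier paper buys brevity and the diffeomorphism property of $x_n$. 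You instead must establish surjectivity by hand for the inverse formula.

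\textbf{An unstated nondegeneracy hypothesis.} You use it twice. First, you divide by $p\cdot\nabla v_n(p)=-\lambda_n(p)$. Second, for the inverse you form $\nabla u(c)/(c\cdot\nabla u(c))$ and invoke sufficiency of the first-order conditions, which for a merely quasi-concave objective requires $\nabla u(c)\neq0$.

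Smoothness, strict monotonicity and strict quasi-concavity do not deliver this. Composing the utility of Example~\ref{ex1} with $t\mapsto(t-1)^3$ leaves preferences and $x_n$ unchanged. Yet $\nabla u(1,1)=0$, so $\lambda_n(1/2,1/2)=0$ and both quotient formulas in (iii) read $0/0$ there.

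The paper excludes this only by stipulating $\lambda(p,w)>0$. You cannot use that at an arbitrary $c$ before knowing $c$ is demanded. So either assume $\nabla u(c)\neq0$ on $\mathbb{R}^L_{++}$ outright, or take the stated diffeomorphism property as given; in the latter case your injectivity sentence alone yields the formula.

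With $\nabla u(c)\neq0$ assumed, you also need the candidate price to lie in $\mathbb{R}^L_{++}$, which you did not check. It does. Suppose $\partial_1u(c)=0<\partial_ku(c)$. For fixed $t>0$ and small $s>0$, the point $y=c+te_1-se_k$ satisfies $u(y)>u(c)$ by strict monotonicity and continuity. Yet $\nabla u(c)\cdot(y-c)<0$, contradicting quasi-concavity. Hence $\nabla u(c)\gg0$.

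\textbf{The other direction of duality in (iv).} The identity $h(p,u)=x(p,e(p,u))$ is the EMP-to-UMP direction, not the one your ``main obstacle'' paragraph addresses. It needs $e(p,u)>0$, i.e.\ $u$ above $u(0)$ and attainable. The statement's $u\in\mathbb{R}$ is too generous here; compare $u\le0$ in Example~\ref{ex1}.

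You can obtain it from the direction you do prove. Put $p'=p/e(p,u)$. Homogeneity gives $e(p',u)=1=e(p',v_n(p'))$. Strict monotonicity of $e$ in $u$ then gives $v_n(p')=u$, hence $h(p,u)=h(p',u)=x_n(p')$.
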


I proceed with the following definition.

\begin{definition}\label{defCanonical}
     The \textit{canonical manifolds} through  $c\in\mathbb{R}^{L}_{++}$ are:
    \begin{itemize}
        \item[(i)] The \textit{indifference hypersurface}, given by $\mathcal{I}(c)=\{y\in\mathbb{R}^{L}_{++}\mid u(y)=u(c)\}$;
        \item[(ii)] The \textit{offer hypersurface}, given by $\mathcal{O}(c)=\{ y\in\mathbb{R}^{L}_{++}\mid \nabla u(y)(y-c)^{T}=0\}$;
        \item[(iii)] The \textit{trade hyperplane}, given by $\mathcal{H}(c)=\{y\in\mathbb{R}^{L}_{++}\mid \nabla u(c)(y-c)^{T}=0\}$.
    \end{itemize}
\end{definition}

The three canonical manifolds are connected $L-1$-dimensional manifolds with a single intersection at $c\in\mathbb{R}^{L}_{++}$. Furthermore, the indifference and the offer hypersurfaces can be written as $\mathcal{I}(c)=\{h((q,1),u(c))\mid q\in\mathbb{R}^{L-1}_{++}\}$ and $\mathcal{O}(c)=\{x_{n}((q,1)/(q,1)\cdot c)\mid q\in\mathbb{R}^{L-1}_{++}\}$ \parencite[Proposition 5]{Dognini_2025b}. Figure \ref{FigCanonical} depicts the canonical manifolds through $c=(1,1)$ for $u(c)=c_{1}c_{2}$.

\begin{figure}[H]
\centering
\includegraphics[width=0.5\textwidth]{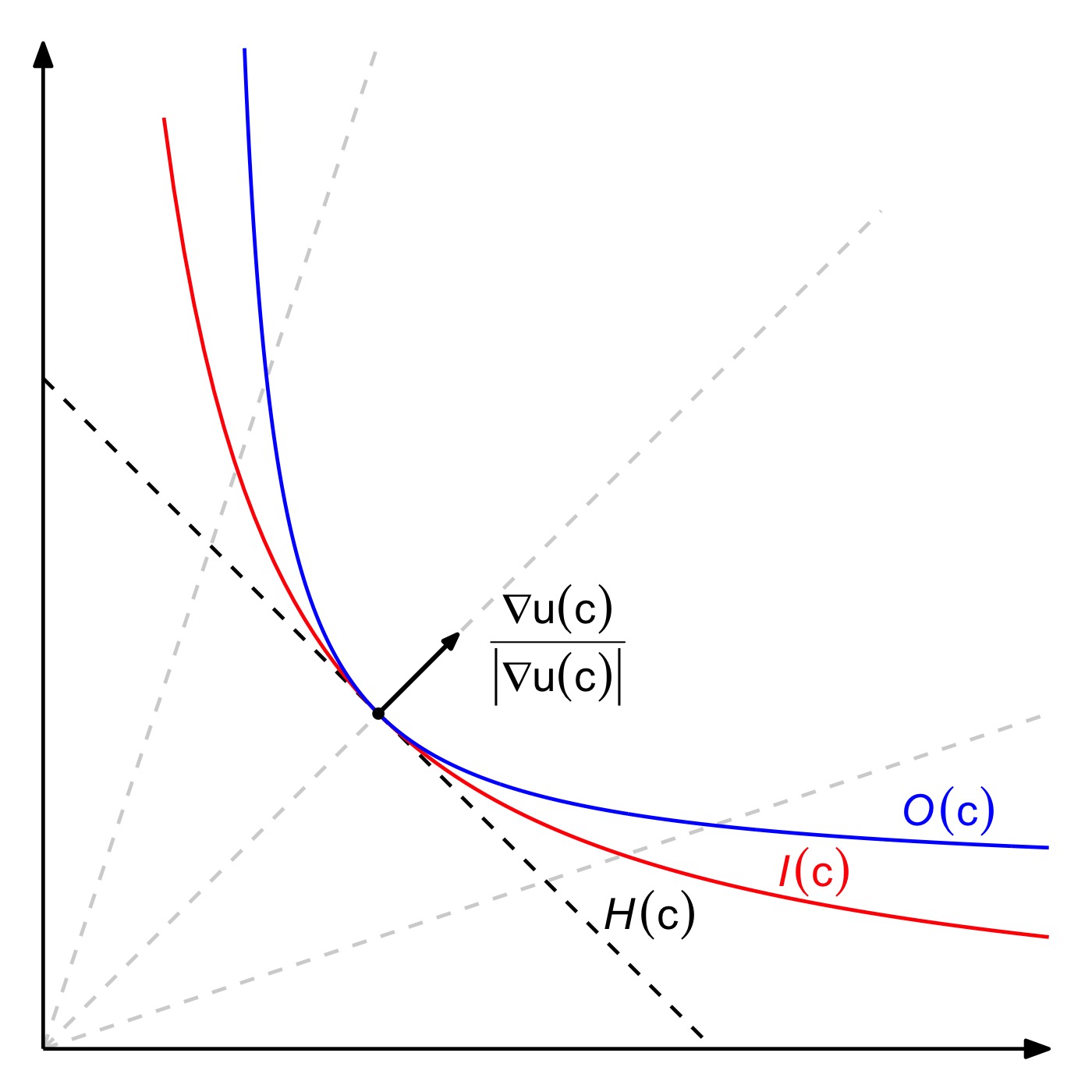}
\caption{Graphical depiction of the canonical manifolds through $c=(1,1)$ for $u(c)=c_{1}c_{2}$.}
\label{FigCanonical}
\end{figure}

The gray dotted rays in Figure \ref{FigCanonical} are wealth expansion paths (\textcite[p. 25]{Mas-ColellWhinstonGreen_1995}) and lead us to note that these paths, for any given relative prices, intersect each of the three canonical manifolds at a unique point. By associating these points, one can see the indifference hypersurface as a bent version of the trade hyperplane around the point $c=(1,1)$. 

Further bending the indifference hypersurface thus leads us to the offer hypersurface, and the degree to which the trade hyperplane is curved can be measured by how far the projection along $\nabla u(c)/\Vert \nabla u(c)\Vert$  ``runs deep'' into the interior of the convex sets. This motivates the following definition.

\begin{definition}\label{defDeepProjection}
    For $c\in\mathbb{R}^{L}_{++}$, the \textit{deep projections} of the indifference hypersurface $\mathcal{I}(c)$, $\pi_{i}:\mathbb{R}^{L-1}_{++}\rightarrow\mathbb{R}_{+}$, and of the offer hypersurface $\mathcal{O}(c)$, $\pi_{o}:\mathbb{R}^{L-1}_{++}\rightarrow\mathbb{R}_{+}$, are given by
    \begin{eqnarray*}
        \pi_{i}(q)&=&\frac{\nabla u(c)}{\Vert \nabla u(c)\Vert}\cdot (h((q,1),u(c))-c)\\
        \pi_{o}(q)&=&\frac{\nabla u(c)}{\Vert \nabla u(c)\Vert}\cdot\biggr( x_{n}\biggr(\frac{(q,1)}{(q,1)\cdot c}\biggr)-c\biggr),
    \end{eqnarray*}
    for $q\in\mathbb{R}^{L-1}_{++}$.
\end{definition}

For $p\in\mathbb{R}^{L}_{++}$, it is convenient to define the auxiliary functions $\eta:\mathbb{R}^{L}_{++}\rightarrow\mathbb{R}_{+}$ and $\omega:\mathbb{R}^{L}_{++}\rightarrow\mathbb{R}_{+}$ as
\begin{eqnarray*}
    \eta(q)&=&p\cdot (h(q,v_{n}(p))-x_{n}(p))\\
    \omega(q)&=&p\cdot\biggr( x_{n}\biggr(\frac{q}{q\cdot x_{n}(p)}\biggr)-x_{n}(p)\biggr).
\end{eqnarray*}
Notice that, for $c=x_{n}(p)\in\mathbb{R}^{L}_{++}$, we have $\pi_{i}(q)=\Vert p\Vert \eta(q,1)$ and $\pi_{o}(q)=\Vert p\Vert \omega(q,1)$. The next theorem characterizes the local behavior of these auxiliary functions.

\begin{theorem}\label{theoAwesome}
    Let $p\in\mathbb{R}^{L}_{++}$, $\eta(\cdot)$ and $\omega(\cdot)$ be as defined above. Then, $\eta(p)=\omega(p)=0$, $\nabla \eta(p)=\nabla \omega(p)=0$, $\mathbf{H}\eta(p)=-\mathbf{H}_{11}e(p,v_{n}(p))$, and
    \begin{eqnarray*}
        \mathbf{H}\omega(p)=-\biggr(1+\lambda_{n}(p) \frac{\partial e(p,v_{n}(p))}{\partial u}\biggr)\mathbf{H}_{11}e(p,v_{n}(p)).
    \end{eqnarray*}
\end{theorem}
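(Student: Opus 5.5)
The plan is to reduce both auxiliary functions to compositions with the Hicksian demand $h(\cdot)$ and then exploit the homogeneity identity $q\mathbf{H}_{11}e(q,u)=0$ from Proposition \ref{propBasicIdentities}(i). Throughout, write $v=v_{n}(p)$ and $x=x_{n}(p)$, so that $x=h(p,v)$ and $p\cdot x=1$ by Proposition \ref{propBasicIdentities}(ii),(iv).

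\emph{Step 1: the function $\eta(\cdot)$.} The value $\eta(p)=0$ is immediate. For the gradient, $\partial\eta/\partial q_{j}=\sum_{i}p_{i}\,\partial h_{i}(q,v)/\partial q_{j}$, that is, $\nabla\eta(q)=p\mathbf{H}_{11}e(q,v)$. At $q=p$ this vanishes by Proposition \ref{propBasicIdentities}(i). For the Hessian, $\partial^{2}\eta/\partial q_{j}\partial q_{k}=\sum_{i}p_{i}\,\partial^{2}h_{i}/\partial q_{j}\partial q_{k}$. I would differentiate the identity $\sum_{i}q_{i}\,\partial h_{i}(q,v)/\partial q_{j}=0$, valid for all $q$, with respect to $q_{k}$. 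This gives
\begin{eqnarray*}
\frac{\partial h_{k}}{\partial q_{j}}+\sum_{i}q_{i}\frac{\partial^{2}h_{i}}{\partial q_{j}\partial q_{k}}=0 .
\end{eqnarray*}
Evaluating at $q=p$ and using the symmetry of $\mathbf{H}_{11}e$ yields $\mathbf{H}\eta(p)=-\mathbf{H}_{11}e(p,v)$.

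\emph{Step 2: rewrite $\omega(\cdot)$ through $h(\cdot)$.} By Proposition \ref{propBasicIdentities}(iv), $x_{n}(r)=h(r,v_{n}(r))$. Homogeneity of degree zero of $h$ in prices then gives
\begin{eqnarray*}
x_{n}\bigl(q/(q\cdot x)\bigr)=h(q,\varphi(q)), \qquad \varphi(q)=v_{n}\bigl(q/(q\cdot x)\bigr),
\end{eqnarray*}
so that $\omega(q)=p\cdot h(q,\varphi(q))-1$. Equivalently, $\varphi(q)$ is characterized by $e(q,\varphi(q))=q\cdot x$, since $e(r,v_{n}(r))=1$. Clearly $\varphi(p)=v$, hence $\omega(p)=0$.

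Differentiating $e(q,\varphi(q))=q\cdot x$ gives $h(q,\varphi(q))+\nabla_{2}e\,\nabla\varphi(q)=x$, so $\nabla\varphi(p)=0$. Differentiating once more and evaluating at $p$, where $\nabla\varphi=0$, gives $\mathbf{H}_{11}e(p,v)+\nabla_{2}e(p,v)\,\mathbf{H}\varphi(p)=0$. Therefore
\begin{eqnarray*}
\mathbf{H}\varphi(p)=-\mathbf{H}_{11}e(p,v)/\nabla_{2}e(p,v).
\end{eqnarray*}

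\emph{Step 3: the derivatives of $\omega(\cdot)$.} By the chain rule, $\nabla\omega(q)=p\mathbf{H}_{11}e(q,\varphi)+(p\cdot\mathbf{H}_{21}e(q,\varphi))\nabla\varphi(q)$, which vanishes at $p$. In the second derivative, every term containing a first derivative of $\varphi$ vanishes at $p$. The only survivors are the term already handled in Step 1 and the term $(p\cdot\partial h/\partial u)\,\mathbf{H}\varphi(p)$. Since $p\cdot\partial h(p,u)/\partial u=\partial e(p,u)/\partial u$ (differentiate $e=p\cdot h$ in $u$), this gives
\begin{eqnarray*}
\mathbf{H}\omega(p)=-\mathbf{H}_{11}e(p,v)-\mathbf{H}_{11}e(p,v)=-2\,\mathbf{H}_{11}e(p,v).
\end{eqnarray*}

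\emph{Step 4: match the stated factor.} It remains to show that $\lambda_{n}(p)\,\partial e(p,v)/\partial u=1$. Differentiate $e(p,v(p,w))=w$ in $w$ at $w=1$ to get $\partial e/\partial u\cdot\partial v/\partial w=1$. Then use $\partial v(p,1)/\partial w=\lambda_{n}(p)$, which follows from the envelope theorem or from $p\cdot\nabla v_{n}(p)=-\lambda_{n}(p)$ in Proposition \ref{propBasicIdentities}(iii) together with degree-zero homogeneity of $v$.

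\emph{Main obstacle.} The hard part is the bookkeeping in Step 3: one must check that every mixed term $\partial^{2}h/\partial q\partial u$ and the term $\partial^{2}h/\partial u^{2}$ is multiplied by a component of $\nabla\varphi(p)$, and hence drops out. If that verification becomes messy, I would instead compute $\mathbf{H}\varphi(p)$ directly from $\varphi=v_{n}(q/q\cdot x)$ using Proposition \ref{propBasicIdentities}(iii). That route keeps the factor in its $\lambda_{n}$-form and avoids invoking Step 4 separately.
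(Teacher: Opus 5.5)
Your proof is correct. For $\eta(\cdot)$, and for the first-order analysis of $\omega(\cdot)$, it coincides with the paper's argument. The paper also writes $\omega(q)=p\cdot(\nabla_{1}e(q,v_{n}(m(q)))-x_{n}(p))$ with $m(q)=q/q\cdot x_{n}(p)$. It likewise uses $\nabla(v_{n}\circ m)(p)=0$ to eliminate the terms involving $\partial^{2}h/\partial q\partial u$ or $\partial^{2}h/\partial u^{2}$, so the bookkeeping you worried about is exactly what it carries out.

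The genuine difference is how you obtain the second derivative of $\varphi=v_{n}\circ m$ at $p$. The paper follows your fallback route. It uses Proposition \ref{propBasicIdentities}(iii) to write $\nabla\varphi(q)=-\lambda_{n}(m(q))(x_{n}(m(q))-x_{n}(p))/q\cdot x_{n}(p)$ and differentiates this explicitly. That effectively gives $\mathbf{H}\varphi(p)=-\lambda_{n}(p)\mathbf{H}_{11}e(p,v_{n}(p))$ and produces the factor $1+\lambda_{n}(p)\,p\cdot\mathbf{H}_{21}e(p,v_{n}(p))$ directly in the stated form.

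You instead differentiate the scalar identity $e(q,\varphi(q))=q\cdot x_{n}(p)$ twice. This is shorter and gives $\mathbf{H}\varphi(p)=-\mathbf{H}_{11}e(p,v_{n}(p))/(\partial e(p,v_{n}(p))/\partial u)$. The cost is a separate duality step, $\lambda_{n}(p)\,\partial e(p,v_{n}(p))/\partial u=1$. Your derivation of it from $e(p,v_{n}(p/w))=w$ and $p\cdot\nabla v_{n}(p)=-\lambda_{n}(p)$ is sound.

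What your route buys is worth stating explicitly. It shows that the factor in the theorem is identically $1$ for every admissible preference, consistent with Examples \ref{ex1} and \ref{ex2}. So the theorem amounts to $\mathbf{H}\omega(p)=-2\mathbf{H}_{11}e(p,v_{n}(p))=2\mathbf{H}\eta(p)$. By Corollary \ref{corDeepProj}, the offer hypersurface is then always exactly twice as curved at $c$ as the indifference hypersurface, in the sense of the deep-projection Hessians. Keeping the factor in $\lambda_{n}$-form, as the paper does, hides this fact and makes the factor look preference-dependent.
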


The following example illustrates Theorem \ref{theoAwesome}.
\begin{example}\label{ex1}
Let $u:\mathbb{R}^{2}_{++}\rightarrow\mathbb{R}$ be given by $u(c)=c_{1}c_{2}$. The normalized Walrasian demand is $x_{n}(p)=(1/2p_{1},1/2p_{2})$, $\lambda_{n}(p)=1/2p_{1}p_{2}$, and the normalized indirect utility function is $v_{n}(p)=1/4p_{1}p_{2}$, $p\in\mathbb{R}^{2}_{++}$. The Hicksian demand function is $h(p,u)=(\sqrt{up_{2}/p_{1}},\sqrt{up_{1}/p_{2}})$ and the expenditure function is $e(p,u)=2\sqrt{p_{1}p_{2}u}$, $(p,u)\in\mathbb{R}^{3}_{++}$. Notice that
\begin{eqnarray*}
    \mathbf{H}_{11}e(p,v_{n}(p))=\frac{1}{4p_{1}p_{2}}\begin{bmatrix}
        -\dfrac{p_{2}}{p_{1}} & 1\\
        1 & -\dfrac{p_{1}}{p_{2}}
    \end{bmatrix},
\end{eqnarray*}
which is negative semi-definite and has eigenvalues $\mu_{1}=0$ and  $\mu_{2}=-\Vert p\Vert^{2}/4p_{1}^{2}p_{2}^{2}$, $p\in\mathbb{R}^{2}_{++}$. Next, we have
\begin{eqnarray*}
    \eta(q)=\frac{1}{2}\sqrt{\frac{p_{1}q_{2}}{p_{2}q_{1}}}+\frac{1}{2}\sqrt{\frac{p_{2}q_{1}}{p_{1}q_{2}}}-1,
\end{eqnarray*}
so that 
\begin{eqnarray*}
    \nabla \eta(q)=\biggr(-\frac{1}{4}\sqrt{\frac{p_{1}q_{2}}{p_{2}q_{1}^{3}}}+\frac{1}{4}\sqrt{\frac{p_{2}}{p_{1}q_{1}q_{2}}},\frac{1}{4}\sqrt{\frac{p_{1}}{p_{2}q_{1}q_{2}}}-\frac{1}{4}\sqrt{\frac{p_{2}q_{1}}{p_{1}q_{2}^{3}}}\biggr)
\end{eqnarray*}
and
\begin{eqnarray*}
    \mathbf{H}\eta(q)=\begin{bmatrix}
        \dfrac{3}{8}\sqrt{\dfrac{p_{1}q_{2}}{p_{2}q_{1}^{5}}}-\dfrac{1}{8}\sqrt{\dfrac{p_{2}}{p_{1}q_{1}^{3}q_{2}}} & 
        -\dfrac{1}{8}\sqrt{\dfrac{p_{1}}{p_{2}q_{1}^{3}q_{2}}}-\dfrac{1}{8}\sqrt{\dfrac{p_{2}}{p_{1}q_{1}q_{2}^{3}}}\\
        -\dfrac{1}{8}\sqrt{\dfrac{p_{1}}{p_{2}q_{1}^{3}q_{2}}}-\dfrac{1}{8}\sqrt{\dfrac{p_{2}}{p_{1}q_{1}q_{2}^{3}}} &
        -\dfrac{1}{8}\sqrt{\dfrac{p_{1}}{p_{2}q_{1}q_{2}^{3}}}+\dfrac{3}{8}\sqrt{\dfrac{p_{2}q_{1}}{p_{1}q_{2}^{5}}}
    \end{bmatrix},
\end{eqnarray*}
for $q\in\mathbb{R}^{2}_{++}$. Therefore, $\eta(p)=0$, $\nabla \eta(p)=0$ and
\begin{eqnarray*}
    \mathbf{H}\eta(p)=\begin{bmatrix}
        \dfrac{1}{4p_{1}^{2}} & 
        -\dfrac{1}{4p_{1}p_{2}}\\
        -\dfrac{1}{4p_{1}p_{2}} &
        \dfrac{1}{4p_{2}^{2}}
    \end{bmatrix}=-\frac{1}{4p_{1}p_{2}}\begin{bmatrix}
        -\dfrac{p_{2}}{p_{1}} & 
        1\\
        1 &
        -\dfrac{p_{1}}{p_{2}}
    \end{bmatrix}=-\mathbf{H}_{11}e(p,v_{n}(p)).
\end{eqnarray*}
Next, we have
\begin{eqnarray*}
    \omega(q)=\frac{p_{1}q_{2}}{4p_{2}q_{1}}+\frac{p_{2}q_{1}}{4p_{1}q_{2}}-\frac{1}{2},
\end{eqnarray*}
so that
\begin{eqnarray*}
    \nabla\omega(q)=\biggr(-\frac{p_{1}q_{2}}{4p_{2}q_{1}^{2}}+\frac{p_{2}}{4p_{1}q_{2}},\frac{p_{1}}{4p_{2}q_{1}}-\frac{p_{2}q_{1}}{4p_{1}q_{2}^{2}}\biggr),
\end{eqnarray*}
and
\begin{eqnarray*}
    \mathbf{H}\omega(q)=\begin{bmatrix}
        \dfrac{p_{1}q_{2}}{2p_{2}q_{1}^{3}} & 
        -\dfrac{p_{1}}{4p_{2}q_{1}^{2}}-\dfrac{p_{2}}{4p_{1}q_{2}^{2}}\\
        -\dfrac{p_{1}}{4p_{2}q_{1}^{2}}-\dfrac{p_{2}}{4p_{1}q_{2}^{2}} &
        \dfrac{p_{2}q_{1}}{2p_{1}q_{2}^{3}}
    \end{bmatrix},
\end{eqnarray*}
for $q\in\mathbb{R}^{2}_{++}$. Notice that $\partial e(p,v_{n}(p))/\partial u=p_{1}p_{2}/2$, so that $\lambda_{n}(p)\partial e(p,v_{n}(p))/\partial u=1$. Therefore, $\omega(p)=0$, $\nabla\omega(p)=0$ and
\begin{eqnarray*}
     \mathbf{H}\omega(p)=\begin{bmatrix}
        \dfrac{1}{2p_{1}^{2}} & 
        -\dfrac{1}{2p_{1}p_{2}}\\
        -\dfrac{1}{2p_{1}p_{2}} &
        \dfrac{1}{2p_{2}^{2}}
    \end{bmatrix}=-2\mathbf{H}_{11}e(p,v_{n}(p))=-\biggr(1+\lambda_{n}(p)\frac{\partial e(p,v_{n}(p))}{\partial u}\biggr)\mathbf{H}_{11}e(p,v_{n}(p)).
\end{eqnarray*}
\end{example}

Before stating the next result, it is convenient to define $f:\mathbb{R}^{L}_{++}\rightarrow\mathbb{R}^{L-1}_{++}$ as\footnote{It is important to notice that this auxiliary function \textit{is not} the flattening diffeomorphism from \textcite{Dognini_2025b} (although it coincides with the first $L-1$ coordinates of it).}
\begin{eqnarray*}
    f(x)=\frac{\nabla_{L-1}u(x)}{\partial_{L}u(x)}
\end{eqnarray*}
for $x\in\mathbb{R}^{L}_{++}$, and $M=\begin{bmatrix}
    I & 0
\end{bmatrix}\in\mathbb{R}^{(L-1)\times L}$, with $I\in\mathbb{R}^{(L-1)\times(L-1)}$ the identity matrix. An immediate corollary of Theorem \ref{theoAwesome} is the following.
\begin{corollary}\label{corDeepProj}
    Let $c\in\mathbb{R}^{L}_{++}$, $p=x_{n}^{-1}(c)$, $\pi_{i}(\cdot)$, $\pi_{o}(\cdot)$ and $f(\cdot)$ be as defined above. Then, $\pi_{i}(f(c))=0$, $\pi_{o}(f(c))=0$, $\nabla \pi_{i}(f(c))=0$, $\nabla \pi_{o}(f(c))=0$, and 
    \begin{eqnarray*}
        \mathbf{H}\pi_{o}(f(c))&=&-\biggr(1+\lambda_{n}(p)\frac{\partial e(p,v_{n}(p))}{\partial u}\biggr)\frac{p_{L}^{2}}{\Vert p \Vert}M \mathbf{H}_{11}e(p,v_{n}(p)) M^{T}\\
        &=&\biggr(1+\lambda_{n}(p)\frac{\partial e(p,v_{n}(p))}{\partial u}\biggr)\mathbf{H}\pi_{i}(f(c)),
    \end{eqnarray*}
    for $p=x_{n}^{-1}(c)$.
\end{corollary}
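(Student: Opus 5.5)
The plan is to write the deep projections as compositions of $\eta(\cdot)$ and $\omega(\cdot)$ with an affine map, and then apply the chain rule to Theorem \ref{theoAwesome}.

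\emph{Reduction to $\eta$ and $\omega$.} Fix $c\in\mathbb{R}^{L}_{++}$ and set $p=x_{n}^{-1}(c)$, so $c=x_{n}(p)$. By Proposition \ref{propBasicIdentities}(iii), $p=\nabla u(c)/c\cdot\nabla u(c)$, hence $\nabla u(c)/\Vert\nabla u(c)\Vert=p/\Vert p\Vert$. Define the affine map $g:\mathbb{R}^{L-1}_{++}\rightarrow\mathbb{R}^{L}_{++}$ by $g(q)=(q,1)$. The remark after the definition of $\eta,\omega$ gives $\pi_{i}=\Vert p\Vert\,\eta\circ g$ and $\pi_{o}=\Vert p\Vert\,\omega\circ g$. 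For $\pi_{i}$ this is immediate, since $c=x_{n}(p)$ and $u(c)=v_{n}(p)$. For $\pi_{o}$ it uses $q\cdot x_{n}(p)=q\cdot c$.

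\emph{Where $g$ hits $p$.} Because $p$ is proportional to $\nabla u(c)$, we have $p/p_{L}=(f(c),1)=g(f(c))$. So $g$ reaches $p$ only up to the scale $p_{L}$. I therefore use the homogeneity of $\eta$ and $\omega$ in $q$.

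\emph{Homogeneity.} The Hicksian demand $h(\cdot,u)$ is homogeneous of degree zero in prices, so $\eta(tq)=\eta(q)$ for $t>0$. Likewise $\omega(tq)=\omega(q)$, because $q/q\cdot x_{n}(p)$ is invariant under scaling of $q$.

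\emph{Translating Theorem \ref{theoAwesome} to $p/p_{L}$.} Degree-zero homogeneity gives $\eta(p/p_{L})=\eta(p)=0$. Differentiating $\eta(tq)=\eta(q)$ gives $\nabla\eta(tq)=t^{-1}\nabla\eta(q)$ and $\mathbf{H}\eta(tq)=t^{-2}\mathbf{H}\eta(q)$. Taking $t=1/p_{L}$ yields $\nabla\eta(p/p_{L})=0$ and
\[
\mathbf{H}\eta(p/p_{L})=p_{L}^{2}\,\mathbf{H}\eta(p)=-p_{L}^{2}\,\mathbf{H}_{11}e(p,v_{n}(p)).
\]
The same computation applies to $\omega$, with the extra factor $1+\lambda_{n}(p)\partial e(p,v_{n}(p))/\partial u$ coming from Theorem \ref{theoAwesome}.

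\emph{Chain rule.} Since $g$ is affine with Jacobian $M^{T}$, the second-order term of the chain rule vanishes. Hence
\[
\nabla\pi_{i}(q)=\Vert p\Vert\,M\nabla\eta(g(q))^{T},\qquad
\mathbf{H}\pi_{i}(q)=\Vert p\Vert\,M\,\mathbf{H}\eta(g(q))\,M^{T}.
\]
Evaluating at $q=f(c)$ gives $\pi_{i}(f(c))=0$ and $\nabla\pi_{i}(f(c))=0$. The Hessian becomes $-\Vert p\Vert\,p_{L}^{2}\,M\mathbf{H}_{11}e(p,v_{n}(p))M^{T}$, and the $\omega$ case gives the same expression times the bending factor. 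Comparing the two Hessians yields the second equality of the corollary.

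\emph{Main obstacle.} The delicate step is bookkeeping the constant in front of $M\mathbf{H}_{11}eM^{T}$. The derivation above gives $\Vert p\Vert p_{L}^{2}$, while the statement shows $p_{L}^{2}/\Vert p\Vert$. I would first recheck the normalization in the identity $\pi_{i}=\Vert p\Vert\eta(q,1)$. Concretely, $\nabla u(c)/\Vert\nabla u(c)\Vert\cdot(y-c)=(p\cdot(y-c))/\Vert p\Vert$, so the correct relation appears to be $\pi_{i}=\eta(q,1)/\Vert p\Vert$. With that relation the constant is $p_{L}^{2}/\Vert p\Vert$, matching the statement; the remark's factor $\Vert p\Vert$ should read $1/\Vert p\Vert$.

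The proportionality between the two Hessians does not depend on this constant. It follows directly from the ratio of $\mathbf{H}\omega(p)$ to $\mathbf{H}\eta(p)$ in Theorem \ref{theoAwesome}.
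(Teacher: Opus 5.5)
Your proposal is correct and is essentially the paper's argument. The paper applies degree-zero homogeneity first, writing $\pi_{i}(q)=\eta(p_{L}q,p_{L})/\Vert p\Vert$, so the affine map lands exactly on $p$ at $q=f(c)$ and the factor $p_{L}^{2}$ comes from the Jacobian $p_{L}M^{T}$. You instead compose with $(q,1)$ and carry the same factor through $\mathbf{H}\eta(p/p_{L})=p_{L}^{2}\mathbf{H}\eta(p)$, which is equivalent bookkeeping.

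Your resolution of the constant is also right. The paper's own appendix proof uses $\pi_{i}(q)=\eta(q,1)/\Vert p\Vert$, so the factor $\Vert p\Vert$ in the main-text remark is a misprint.
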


Corollary \ref{corDeepProj} reveals that the factor $\lambda_{n}(p)\partial e(p,v_{n}(p))/\partial u>0$ measures how much more bent around $c\in\mathbb{R}^{L}_{++}$ the offer hypersurface is relative to the indifference hypersurface. The following example illustrates Corollary \ref{corDeepProj}.
\begin{example}\label{ex2}
I build on Example \ref{ex1}. For $c\in\mathbb{R}^{2}_{++}$, Definition \ref{defDeepProjection} implies
\begin{eqnarray*}
    \pi_{i}(q)&=&\sqrt{\frac{c_{1}c_{2}}{q(c_{1}^{2}+c_{2}^{2})}}(\sqrt{c_{2}}-\sqrt{qc_{1}})^{2}\\
    \pi_{o}(q)&=&\frac{(c_{2}-qc_{1})^{2}}{2q\sqrt{c_{1}^{2}+c_{2}^{2}}}
\end{eqnarray*}
for $q>0$. Also, $f(c)=\partial_{1} u(c)/\partial_{2} u(c)=c_{2}/c_{1}$ and, therefore, $\pi_{i}(f(c))=\pi_{o}(f(c))=0$. Next,
\begin{eqnarray*}
    \pi_{i}^{\prime}(q)&=&-\frac{1}{2}\sqrt{\frac{c_{1}c_{2}}{q^{3}(c_{1}^{2}+c_{2}^{2})}}(\sqrt{c_{2}}-\sqrt{qc_{1}})^{2}-\sqrt{\frac{c_{1}c_{2}}{q(c_{1}^{2}+c_{2}^{2})}}(\sqrt{c_{2}}-\sqrt{qc_{1}})\sqrt{\frac{c_{1}}{q}}\\
    \pi_{o}^{\prime}(q)&=&-\frac{(c_{2}-qc_{1})^{2}}{2q^{2}\sqrt{c_{1}^{2}+c_{2}^{2}}}-\frac{(c_{2}-qc_{1})}{q\sqrt{c_{1}^{2}+c_{2}^{2}}}c_{1}
\end{eqnarray*}
for $q>0$. Therefore, $\pi_{i}^{\prime}(f(c))=\pi_{o}^{\prime}(f(c))=0$. Finally, notice that
\begin{eqnarray*}
    \pi^{\prime\prime}_{i}(f(c))&=&\frac{c_{1}^{3}}{2c_{2}\sqrt{c_{1}^{2}+c_{2}^{2}}}\\
    \pi^{\prime\prime}_{o}(f(c))&=&\frac{c_{1}^{3}}{c_{2}\sqrt{c_{1}^{2}+c_{2}^{2}}}.
\end{eqnarray*}
Since $p=x_{n}^{-1}(c)=(1/2c_{1},1/2c_{2})$ and $\lambda_{n}(p)\partial e(p,v_{n}(p))/\partial u=1$, we have
\begin{eqnarray*}
    -\frac{p_{2}^{2}}{\Vert p \Vert} M \mathbf{H}_{11}e(p,v_{n}(p)) M^{T}=\frac{p_{2}^{2}}{\Vert p \Vert }\frac{1}{4p_{1}^{2}}=\frac{c_{1}^{3}}{2c_{2}\sqrt{c_{1}^{2}+c_{2}^{2}}}=\pi^{\prime\prime}_{i}(f(c))=\frac{\pi^{\prime\prime}_{o}(f(c))}{2}.
\end{eqnarray*}
\end{example}

The next result uses the Taylor expansion to provide upper and lower bounds for the deep projections in a neighborhood of $c\in\mathbb{R}^{L}_{++}$.

\begin{proposition}\label{propBoundsHessian}
    Let $c\in\mathbb{R}^{L}_{++}$, $p=x^{-1}_{n}(c)$, $\pi_{i}(\cdot)$ and $\pi_{o}(\cdot)$ be as defined above. If $\mathbf{dim}\,\mathbf{ker}\,\mathbf{H}_{11}e(p,v_{n}(p))=1$, there is $\varepsilon>0$ such that 
    \begin{eqnarray*}
        \frac{p_{L}^{2}}{\Vert p\Vert}\Vert q -f(c)\Vert^{2}\frac{\mu_{\min}(p)}{4}\leq \pi_{i}(q)\leq\frac{p_{L}^{2}}{\Vert p\Vert}\Vert q -f(c)\Vert^{2}\mu_{\max}(p)
    \end{eqnarray*}
    and
    \begin{eqnarray*}
        \frac{p_{L}^{2}}{\Vert p\Vert}\Vert q -f(c)\Vert^{2}\frac{\mu_{\min}(p)}{2}\leq \frac{\pi_{o}(q)}{1+\lambda_{n}(p)\partial e(p,v_{n}(p))/\partial u}\leq\frac{p_{L}^{2}}{\Vert p\Vert}\Vert q -f(c)\Vert^{2}2\mu_{\max}(p),
    \end{eqnarray*}
    for $q\in B(f(c),\varepsilon)$, with $0<\mu_{\min}(p)\leq\mu_{\max}(p)$ the smallest and greatest eigenvalues of $-M\mathbf{H}_{11}e(p,v_{n}(p))M^{T}\in\mathbb{R}^{(L-1)\times(L-1)}$.
\end{proposition}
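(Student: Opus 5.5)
The plan is a second-order Taylor expansion of $\pi_{i}(\cdot)$ and $\pi_{o}(\cdot)$ around $f(c)$, using the values from Corollary \ref{corDeepProj}. The one genuinely non-routine step is showing that the reduced Hessian is positive definite.

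\textbf{Step 1: Taylor's theorem.} By Corollary \ref{corDeepProj}, $\pi_{i}(f(c))=\pi_{o}(f(c))=0$ and $\nabla\pi_{i}(f(c))=\nabla\pi_{o}(f(c))=0$. Both functions are $C^{2}$, since $h(\cdot)$ and $x_{n}(\cdot)$ are smooth. Taylor's theorem with Peano remainder then gives
\begin{eqnarray*}
\pi_{i}(q)=\tfrac{1}{2}(q-f(c))\mathbf{H}\pi_{i}(f(c))(q-f(c))^{T}+r_{i}(q),
\end{eqnarray*}
where $r_{i}(q)=o(\Vert q-f(c)\Vert^{2})$. The expansion for $\pi_{o}$ is the same, with remainder $r_{o}$.

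\textbf{Step 2: positive definiteness.} By Corollary \ref{corDeepProj}, $\mathbf{H}\pi_{i}(f(c))=\frac{p_{L}^{2}}{\Vert p\Vert}A$ with $A=-M\mathbf{H}_{11}e(p,v_{n}(p))M^{T}$. We need $A$ positive definite, so that $\mu_{\min}(p)>0$.

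\begin{itemize}
\item Concavity of $e$ in prices makes $-\mathbf{H}_{11}e$ positive semidefinite, hence $A$ is positive semidefinite.
\item By Proposition \ref{propBasicIdentities}(i), $p\mathbf{H}_{11}e=0$, so $p$ spans the kernel, which is one-dimensional by hypothesis.
\item Suppose $zAz^{T}=0$ for some $z\in\mathbb{R}^{L-1}$. Then $(z,0)$ lies in the kernel of the semidefinite matrix $-\mathbf{H}_{11}e$, so $(z,0)=tp$. Since $p_{L}>0$, this forces $t=0$ and hence $z=0$.
\end{itemize}

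This kernel argument is the step I expect to carry the weight. Everything else is mechanical.

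\textbf{Step 3: choosing $\varepsilon$.} With $\delta=q-f(c)$, the quadratic term equals $\frac{p_{L}^{2}}{2\Vert p\Vert}\delta A\delta^{T}$. It therefore lies between $\frac{p_{L}^{2}}{2\Vert p\Vert}\mu_{\min}\Vert\delta\Vert^{2}$ and $\frac{p_{L}^{2}}{2\Vert p\Vert}\mu_{\max}\Vert\delta\Vert^{2}$. Choose $\varepsilon>0$ so that on $B(f(c),\varepsilon)$
\begin{eqnarray*}
|r_{i}(q)|\leq \tfrac{1}{4}\tfrac{p_{L}^{2}}{\Vert p\Vert}\mu_{\min}\Vert\delta\Vert^{2}.
\end{eqnarray*}
This is possible because $\mu_{\min}>0$. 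On this ball,
\begin{eqnarray*}
\pi_{i}(q)\geq \left(\tfrac{1}{2}-\tfrac{1}{4}\right)\tfrac{p_{L}^{2}}{\Vert p\Vert}\mu_{\min}\Vert\delta\Vert^{2}=\tfrac{p_{L}^{2}}{\Vert p\Vert}\tfrac{\mu_{\min}}{4}\Vert\delta\Vert^{2},
\end{eqnarray*}
and, using $\mu_{\min}\leq\mu_{\max}$,
\begin{eqnarray*}
\pi_{i}(q)\leq \left(\tfrac{1}{2}+\tfrac{1}{4}\right)\tfrac{p_{L}^{2}}{\Vert p\Vert}\mu_{\max}\Vert\delta\Vert^{2}\leq \tfrac{p_{L}^{2}}{\Vert p\Vert}\mu_{\max}\Vert\delta\Vert^{2}.
\end{eqnarray*}

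\textbf{Step 4: the offer projection.} Write $k=1+\lambda_{n}(p)\partial e(p,v_{n}(p))/\partial u$, which is positive. Divide $\pi_{o}$ by $k$. By Corollary \ref{corDeepProj}, the Hessian of $\pi_{o}/k$ at $f(c)$ equals $\mathbf{H}\pi_{i}(f(c))$. Shrink $\varepsilon$ so that $|r_{o}/k|\leq \tfrac{1}{4}\tfrac{p_{L}^{2}}{\Vert p\Vert}\mu_{\min}\Vert\delta\Vert^{2}$ as well. The same computation as in Step 3 gives bounds with constants $\mu_{\min}/4$ and $\mu_{\max}$. These imply the weaker stated constants $\mu_{\min}/2$ and $2\mu_{\max}$ only if we tighten the remainder appropriately. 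If instead we bound by the full quadratic form of $\pi_{o}$, whose Hessian is $k$ times that of $\pi_{i}$, the quadratic term of $\pi_{o}/k$ is $\tfrac{1}{2}\delta\,\mathbf{H}\pi_{i}\,\delta^{T}$. Taking the remainder at most a fraction of $\mu_{\min}$ gives a lower bound between $\mu_{\min}/4$ and $\mu_{\min}/2$. So I would aim for the stated constants by choosing the remainder tolerance to match them, or simply prove the stronger constants and note that they imply the stated ones where compatible.
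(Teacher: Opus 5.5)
Your Steps 1--3 follow the paper's proof. You expand to second order at $f(c)$ using Corollary~\ref{corDeepProj}, show by a kernel argument that $A=-M\mathbf{H}_{11}e(p,v_{n}(p))M^{T}$ is positive definite, and choose $\varepsilon$ so the Peano remainder is at most $\frac{1}{4}\frac{p_{L}^{2}}{\Vert p\Vert}\mu_{\min}(p)\Vert q-f(c)\Vert^{2}$. That correctly gives both bounds for $\pi_{i}$. Your Step 2 is tighter than the paper's version. The paper passes from $y\in\mathbf{ker}\,M\mathbf{H}_{11}e(p,v_{n}(p))M^{T}$ to $yM\in\mathbf{ker}\,\mathbf{H}_{11}e(p,v_{n}(p))$, and from a trivial kernel to definiteness. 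Both steps rest on the negative semidefiniteness of $\mathbf{H}_{11}e$, which you invoke explicitly and the paper leaves unstated.

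The gap is in Step 4, and it cannot be closed. First, the direction is reversed: a lower bound with constant $\mu_{\min}/4$ is weaker, not stronger, than one with $\mu_{\min}/2$. Only the upper bound transfers, from $\mu_{\max}$ to $2\mu_{\max}$. Second, the Hessian of $\pi_{o}/k$ at $f(c)$ is $\frac{p_{L}^{2}}{\Vert p\Vert}A$. So the quadratic part of $\pi_{o}(q)/k$ equals $\frac{p_{L}^{2}}{\Vert p\Vert}\frac{\mu_{\min}(p)}{2}\Vert q-f(c)\Vert^{2}$ exactly when $q-f(c)$ is a $\mu_{\min}$-eigenvector. Any remainder tolerance $\tau\mu_{\min}$ with $\tau>0$ then leaves you at $(\frac{1}{2}-\tau)\mu_{\min}$. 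Reaching $\mu_{\min}/2$ would require the remainder to have a sign, which Taylor's theorem does not provide. Third, the stated bound is false. In Examples~\ref{ex1}--\ref{ex2}, take $u(c)=c_{1}c_{2}$, $k=2$, $q_{0}=f(c)=c_{2}/c_{1}$ and $\frac{p_{2}^{2}}{\Vert p\Vert}\mu_{\min}(p)=\pi_{i}^{\prime\prime}(f(c))=\frac{c_{1}^{3}}{2c_{2}\sqrt{c_{1}^{2}+c_{2}^{2}}}$. Then
\[
\frac{\pi_{o}(q)}{k}=\frac{c_{1}^{2}(q-q_{0})^{2}}{4q\sqrt{c_{1}^{2}+c_{2}^{2}}}
\qquad\text{and}\qquad
\frac{p_{2}^{2}}{\Vert p\Vert}\frac{\mu_{\min}(p)}{2}(q-q_{0})^{2}=\frac{c_{1}^{3}(q-q_{0})^{2}}{4c_{2}\sqrt{c_{1}^{2}+c_{2}^{2}}}.
\]
The ratio of the left expression to the right one is $q_{0}/q<1$ for every $q>q_{0}$, so the claimed lower bound fails on every ball around $f(c)$.

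The paper's one-line ``the same reasoning can be applied to $\pi_{o}(\cdot)$'' has the same defect. That reasoning is exactly your Step 3 applied to $\pi_{o}/k$, and it yields the constants $\mu_{\min}/4$ and $\mu_{\max}$ (hence $2\mu_{\max}$), not $\mu_{\min}/2$. So instead of ``choosing the tolerance to match,'' prove that corrected two-sided bound for $\pi_{o}/k$ and state explicitly that the lower constant $\mu_{\min}/2$ cannot hold in general. The $\pi_{o}$ bounds are not used later: Theorems~\ref{theoFundamentalIneq} and~\ref{theoUpperBoundCass} rely only on the bounds for $\pi_{i}$.
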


Before stating the next result, the following lemma is necessary. It is convenient to define, given $c,y\in\mathbb{R}^{L}_{++}$, $g:(-\delta,\delta)\rightarrow\mathbb{R}$ as $g(t)=u(c+ty)$, for $\delta>0$ sufficiently small so that $B(c,\delta\Vert y\Vert)\subset\mathbb{R}^{L}_{++}$.

\begin{lemma}\label{lemmaLowerBoundHessianUtility}
    Let $c\in\mathbb{R}^{L}_{++}$, $f(\cdot)$ and $g(\cdot)$ be as defined above. Suppose that $g^{\prime\prime\prime}(0)\neq0$, for all $y\in \mathbf{span}\{\nabla u(c)\}^{\perp}$. Then, for all $K>0$, there is $\varepsilon>0$ such that
    \begin{eqnarray*}
         \biggr\Vert\frac{(\tilde{c}-c)}{\Vert \tilde{c}-c\Vert}\mathbf{H}u(c)(I-e_{L}^{T}(f(c),1))\biggr\Vert^{2}>\varepsilon,
    \end{eqnarray*}
    for all $\tilde{c}\in B(c,\varepsilon K)/\{c\}\subset\mathbb{R}^{L}_{++}$ with $u(\tilde{c})=u(c)$.
\end{lemma}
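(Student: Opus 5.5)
The plan is to reduce the statement to the claim that $\mathbf{H}u(c)$ is negative definite on the tangent space $T=\mathbf{span}\{\nabla u(c)\}^{\perp}$. After that, a compactness and continuity argument on unit directions does the rest. Write $P=I-e_{L}^{T}(f(c),1)$.

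\textbf{Step 1: the kernel of $P$.} Since $(f(c),1)=\nabla u(c)/\partial_{L}u(c)$, a row vector $w$ satisfies $wP=w-w_{L}(f(c),1)$. Hence $wP=0$ if and only if $w$ is a multiple of $\nabla u(c)$. So the quantity in the statement vanishes exactly when $d\,\mathbf{H}u(c)$ is parallel to $\nabla u(c)$, where $d=(\tilde c-c)/\Vert\tilde c-c\Vert$.

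\textbf{Step 2: negative definiteness on $T$.} Take $y\in T$, $y\neq 0$. This requires $y$ to be any tangent direction, not only one in $\mathbb{R}^{L}_{++}$; I read $g$ as defined for such $y$, which is harmless since $c$ is interior.
\begin{itemize}
\item By Proposition \ref{propBasicIdentities}(iii), $c=x_{n}(p)$ with $p=x_{n}^{-1}(c)$ proportional to $\nabla u(c)$. So $c$ maximizes $u$ on the hyperplane $\{y:p\cdot y=1\}$.
\item Consequently $t\mapsto g(t)=u(c+ty)$ has a local maximum at $t=0$, with $g'(0)=\nabla u(c)\cdot y=0$.
\item If $g''(0)=y\mathbf{H}u(c)y^{T}=0$, a local maximum would force $g'''(0)=0$, since otherwise $g(t)-g(0)\approx g'''(0)t^{3}/6$ changes sign at $0$. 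This contradicts the hypothesis.
\end{itemize}
Together with $g''(0)\le 0$ (second-order condition), we get $y\mathbf{H}u(c)y^{T}<0$ for all $y\in T\setminus\{0\}$.

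Now suppose $d\,\mathbf{H}u(c)=\alpha\nabla u(c)$ for some unit $d\in T$. Then $d\mathbf{H}u(c)d^{T}=\alpha\,\nabla u(c)\cdot d=0$, which is a contradiction. Hence $\phi(d)=\Vert d\,\mathbf{H}u(c)P\Vert^{2}$ is continuous and strictly positive on the compact set $S_{T}=\{d\in T:\Vert d\Vert=1\}$. Let $m=\min_{S_{T}}\phi>0$. By continuity of $\phi$ on the whole unit sphere, there is $\rho>0$ such that $\phi(d)>m/2$ for every unit $d$ with $\mathrm{dist}(d,S_{T})<\rho$.

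\textbf{Step 3: secant directions on the indifference hypersurface (main obstacle).} I must show that for $\tilde c\in\mathcal{I}(c)$ close to $c$, the unit secant $d=(\tilde c-c)/\Vert\tilde c-c\Vert$ lies within $\rho$ of $S_{T}$.
\begin{itemize}
\item A Taylor expansion of $u$ at $c$, using $u(\tilde c)=u(c)$, gives $\nabla u(c)\cdot(\tilde c-c)=O(\Vert\tilde c-c\Vert^{2})$. The constant is uniform on a compact ball inside $\mathbb{R}^{L}_{++}$ because $u$ is smooth.
\item Dividing by $\Vert\tilde c-c\Vert$ gives $|\nabla u(c)\cdot d|\le C\Vert\tilde c-c\Vert$.
\item Normalizing the projection of $d$ onto $T$ then yields $\mathrm{dist}(d,S_{T})\le C'\Vert\tilde c-c\Vert$.
\end{itemize}
So there is $r>0$ such that every $\tilde c\in B(c,r)\setminus\{c\}$ with $u(\tilde c)=u(c)$ satisfies $\phi(d)>m/2$.

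\textbf{Step 4: choosing $\varepsilon$.} Given $K>0$, set $\varepsilon=\min\{m/4,\,r/(2K)\}$. Then $\varepsilon K<r$, so every admissible $\tilde c\in B(c,\varepsilon K)\setminus\{c\}$ gives $\phi(d)>m/2>\varepsilon$. Shrinking $r$ if needed, we may also assume $B(c,r)\subset\mathbb{R}^{L}_{++}$, which proves the lemma.
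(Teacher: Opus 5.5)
Your proof is correct and follows essentially the same route as the paper. Both arguments rest on three facts: unit secants of the indifference hypersurface become tangent to $\mathbf{span}\{\nabla u(c)\}^{\perp}$; $w(I-e_{L}^{T}(f(c),1))=0$ only for $w$ parallel to $\nabla u(c)$; and a tangent $d$ with $d\,\mathbf{H}u(c)$ parallel to $\nabla u(c)$ has $g''(0)=0$, so the nonzero cubic term contradicts $c$ maximizing $u$ on its supporting hyperplane. The difference is only in packaging: the paper argues by contradiction through a convergent subsequence of secant directions, while you run the same compactness directly, which gives an explicit $\varepsilon=\min\{m/4,r/(2K)\}$ and spells out the reading of $g$ for non-positive tangent directions that the paper leaves implicit.
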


\begin{theorem}\label{theoFundamentalIneq}
   Let $c\in\mathbb{R}^{L}_{++}$, $\pi_{i}(\cdot)$, $f(\cdot)$ and $g(\cdot)$ be as defined above. Suppose that $g^{\prime\prime\prime}(0)\neq0$, for all $y\in \mathbf{span}\{\nabla u(c)\}^{\perp}$. Then, for all $K>0$, there is $\varepsilon>0$ such that $\pi_{i}(f(\tilde{c}))\geq \varepsilon\Vert\tilde{c}-c\Vert^{2}$, for $\tilde{c}\in B(c,\varepsilon K)\subseteq\mathbb{R}^{L}_{++}$, $u(\tilde{c})=u(c)$.
\end{theorem}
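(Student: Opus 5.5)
The plan is to combine the quadratic lower bound of Proposition \ref{propBoundsHessian} (in the variable $q$) with a lower bound on $\Vert f(\tilde{c})-f(c)\Vert$ in terms of $\Vert\tilde{c}-c\Vert$, and to obtain the latter from Lemma \ref{lemmaLowerBoundHessianUtility}. Concretely, I will show there are constants $a,b>0$ such that, for $\tilde{c}$ near $c$ on the indifference hypersurface $\mathcal{I}(c)$,
\begin{eqnarray*}
\pi_{i}(f(\tilde{c}))\geq a\Vert f(\tilde{c})-f(c)\Vert^{2}\quad\textrm{and}\quad\Vert f(\tilde{c})-f(c)\Vert\geq b\Vert\tilde{c}-c\Vert .
\end{eqnarray*}
Then $\pi_{i}(f(\tilde{c}))\geq ab^{2}\Vert\tilde{c}-c\Vert^{2}$, and $\varepsilon$ is shrunk so that $\varepsilon\leq ab^{2}$ and $B(c,\varepsilon K)$ lies inside both neighborhoods.

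\emph{First step.} Note that $\pi_{i}(f(\tilde{c}))$ is well defined through Definition \ref{defDeepProjection}. Since $u(\tilde{c})=u(c)$, the Hicksian bundle $h((f(\tilde{c}),1),u(c))$ is exactly $\tilde{c}$, because $(f(\tilde{c}),1)$ is proportional to $\nabla u(\tilde{c})$. Hence $\pi_{i}(f(\tilde{c}))=\nabla u(c)\cdot(\tilde{c}-c)/\Vert\nabla u(c)\Vert$.

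For the quadratic lower bound in $q$, I apply Proposition \ref{propBoundsHessian}, which requires $\mathbf{dim}\,\mathbf{ker}\,\mathbf{H}_{11}e(p,v_{n}(p))=1$. I expect to derive this from the hypothesis $g'''(0)\neq0$ together with strict quasi-concavity. If that fails, I will argue directly: the hypothesis together with Lemma \ref{lemmaLowerBoundHessianUtility} forces $\mathbf{H}u(c)$ to be nondegenerate on $\nabla u(c)^{\perp}$ in the relevant directions, so that $-M\mathbf{H}_{11}eM^{T}$ has $\mu_{\min}>0$. This gives $a=p_{L}^{2}\mu_{\min}(p)/(4\Vert p\Vert)$ on some ball $B(f(c),\varepsilon_{1})$. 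Continuity of $f$ then gives a ball around $c$ that $f$ maps into $B(f(c),\varepsilon_{1})$.

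\emph{Second step.} For the lower bound on $\Vert f(\tilde{c})-f(c)\Vert$, I Taylor-expand $f$ at $c$. Differentiating $f=\nabla_{L-1}u/\partial_{L}u$ gives
\begin{eqnarray*}
\mathbf{J}f(c)=\frac{1}{\partial_{L}u(c)}\bigl(M\mathbf{H}u(c)-f(c)^{T}e_{L}\mathbf{H}u(c)\bigr),
\end{eqnarray*}
so that $(\tilde{c}-c)\mathbf{J}f(c)^{T}$ equals, up to the factor $1/\partial_{L}u(c)$, the row vector $(\tilde{c}-c)\mathbf{H}u(c)(I-e_{L}^{T}(f(c),1))$ restricted to its first $L-1$ coordinates. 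Its last coordinate is zero, since $(I-e_{L}^{T}(f(c),1))e_{L}^{T}=e_{L}^{T}-e_{L}^{T}=0$, so nothing is lost by the restriction. Lemma \ref{lemmaLowerBoundHessianUtility}, applied with the same $K$, then yields
\begin{eqnarray*}
\Vert (\tilde{c}-c)\mathbf{J}f(c)^{T}\Vert\geq\frac{\sqrt{\varepsilon_{2}}}{\partial_{L}u(c)}\Vert\tilde{c}-c\Vert
\end{eqnarray*}
for $\tilde{c}\in B(c,\varepsilon_{2}K)$ on $\mathcal{I}(c)$. Smoothness of $u$ makes the Taylor remainder $O(\Vert\tilde{c}-c\Vert^{2})$, so after shrinking the ball the remainder is at most half the linear term. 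This gives $b=\sqrt{\varepsilon_{2}}/(2\partial_{L}u(c))$.

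\emph{Main obstacle.} The delicate part is the interplay between quantifiers. The lemma's $\varepsilon$ serves both as the radius scale, through $\varepsilon K$, and as the lower bound, and the final $\varepsilon$ must be chosen as a minimum of several quantities that are compatible with the fixed $K$. I will handle this by taking $\varepsilon=\min\{\varepsilon_{2},\,ab^{2},\,\varepsilon_{1}'/K\}$, where $\varepsilon_{1}'$ is the radius from continuity and the Taylor-remainder step. Shrinking $\varepsilon$ keeps the lemma's conclusion valid on the smaller ball, while the constant $b$ is fixed by the original $\varepsilon_{2}$. Secondarily, I must justify the kernel condition needed for Proposition \ref{propBoundsHessian}; that is where I expect the hypothesis $g'''(0)\neq0$ to be used beyond the lemma.
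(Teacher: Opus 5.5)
Your proposal is correct and follows the paper's proof. Proposition \ref{propBoundsHessian} plus continuity of $f$ gives $\pi_{i}(f(\tilde c))\geq a\Vert f(\tilde c)-f(c)\Vert^{2}$, and a Taylor expansion combined with Lemma \ref{lemmaLowerBoundHessianUtility} gives $\Vert f(\tilde c)-f(c)\Vert\geq b\Vert\tilde c-c\Vert$; you expand $f$ to first order where the paper expands $\Vert f(x)-f(c)\Vert^{2}$ to second order, which is the same computation, and your bookkeeping (keeping the lemma's lower bound fixed while only shrinking the radius) is cleaner than the paper's. The paper invokes the kernel condition you flag without comment, and your first route supplies it: strict quasi-concavity gives $t\mapsto u(c+ty)$ a strict local maximum at $0$ for nonzero $y\perp\nabla u(c)$, so $g'''(0)\neq0$ forces $g''(0)<0$, i.e., $\mathbf{H}u(c)$ is negative definite on $\mathbf{span}\{\nabla u(c)\}^{\perp}$, which is the standard bordered-Hessian condition giving $\mathbf{dim}\,\mathbf{ker}\,\mathbf{H}_{11}e(p,v_{n}(p))=1$.
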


Theorem \ref{theoFundamentalIneq} reveals a local quadratic lower bound for the deep projection of the indifference hypersurface when one considers not the marginal rates as arguments, but the consumption vectors themselves. Similarly, the next result provides a local quadratic upper bound when $L=2$.

\begin{theorem}\label{theoUpperBoundCass}
    Let $L=2$, $c\in\mathbb{R}^{2}_{++}$, $\pi_{i}(\cdot)$ and $f(\cdot)$ be as defined above. Then, for all $K>0$, there is $\varepsilon>0$ such that $\pi_{i}(f(\tilde{c}))\leq (\tilde{c}_{2}-c_{2})^{2}/\varepsilon$, for $\tilde{c}\in B(c,\varepsilon K)\subset\mathbb{R}^{L}_{++}$, $u(\tilde{c})=u(c)$, $\tilde{c}_{1}<c_{1}$.
\end{theorem}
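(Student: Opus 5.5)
The plan is to reduce $\pi_{i}(f(\tilde{c}))$ to a linear functional of $\tilde{c}-c$, then bound that functional by a second-order Taylor remainder. Fix $c\in\mathbb{R}^{2}_{++}$ and $\tilde{c}$ with $u(\tilde{c})=u(c)$.

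\textbf{Step 1 (identify the projection).} Set $q=f(\tilde{c})=\partial_{1}u(\tilde{c})/\partial_{2}u(\tilde{c})$. Then $(q,1)$ is proportional to $\nabla u(\tilde{c})$. Since $\tilde{c}\in\mathcal{I}(c)$ is interior and the level set is strictly convex, $\tilde{c}$ is the unique expenditure minimizer at prices $(q,1)$ and utility $u(c)$, so $h((q,1),u(c))=\tilde{c}$. Hence Definition \ref{defDeepProjection} gives
\begin{eqnarray*}
\pi_{i}(f(\tilde{c}))=\frac{\nabla u(c)\cdot(\tilde{c}-c)}{\Vert\nabla u(c)\Vert}.
\end{eqnarray*}

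\textbf{Step 2 (Taylor bound).} Expand $u$ to second order around $c$ along the segment to $\tilde{c}$. Using $u(\tilde{c})=u(c)$,
\begin{eqnarray*}
0=\nabla u(c)\cdot(\tilde{c}-c)+\tfrac{1}{2}(\tilde{c}-c)\mathbf{H}u(\xi)(\tilde{c}-c)^{T}
\end{eqnarray*}
for some $\xi$ on that segment. Choose a closed ball $\bar{B}(c,r)\subset\mathbb{R}^{2}_{++}$ and let $C_{1}=\max_{\bar{B}(c,r)}\Vert\mathbf{H}u\Vert$, which is finite since $u$ is smooth. For $\tilde{c}\in\bar{B}(c,r)$ this gives
\begin{eqnarray*}
\pi_{i}(f(\tilde{c}))\leq \frac{C_{1}}{2\Vert\nabla u(c)\Vert}\Vert\tilde{c}-c\Vert^{2}.
\end{eqnarray*}

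\textbf{Step 3 (from $\Vert\tilde{c}-c\Vert$ to $|\tilde{c}_{2}-c_{2}|$).} This is the only step that uses $L=2$, and it is the main point requiring care. Because $\partial_{1}u>0$, the implicit function theorem writes $\mathcal{I}(c)$ near $c$ as a graph $y_{1}=\varphi(y_{2})$, with
\begin{eqnarray*}
\varphi^{\prime}=-\frac{\partial_{2}u}{\partial_{1}u},
\end{eqnarray*}
which is continuous and hence bounded by some $C_{2}$ on a compact neighbourhood. Shrinking $r$ if necessary, every $\tilde{c}\in\mathcal{I}(c)\cap\bar{B}(c,r)$ lies on this graph. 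The one-dimensional mean value theorem then gives $|\tilde{c}_{1}-c_{1}|\leq C_{2}|\tilde{c}_{2}-c_{2}|$, so
\begin{eqnarray*}
\Vert\tilde{c}-c\Vert^{2}\leq(1+C_{2}^{2})(\tilde{c}_{2}-c_{2})^{2}.
\end{eqnarray*}
The restriction $\tilde{c}_{1}<c_{1}$ only selects one branch of the curve around $c$. The bound holds on both branches, so it is harmless, though it could be used to work with a one-sided graph if preferred.

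\textbf{Step 4 (choice of $\varepsilon$).} Given $K>0$, take $\varepsilon>0$ small enough that $\varepsilon K\leq r$ and
\begin{eqnarray*}
\frac{1}{\varepsilon}\geq\frac{C_{1}(1+C_{2}^{2})}{2\Vert\nabla u(c)\Vert}.
\end{eqnarray*}
Decreasing $\varepsilon$ shrinks the ball and enlarges $1/\varepsilon$, so both requirements can be met at once. Combining Steps 1–3 then yields $\pi_{i}(f(\tilde{c}))\leq(\tilde{c}_{2}-c_{2})^{2}/\varepsilon$ on $B(c,\varepsilon K)$, as claimed.
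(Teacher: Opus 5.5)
Your proof is correct, and it takes a genuinely different route from the paper's.

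The paper works in price space. It invokes the upper bound of Proposition \ref{propBoundsHessian}, $\pi_{i}(q)\leq p_{L}^{2}\mu_{\max}(p)\Vert q-f(c)\Vert^{2}/\Vert p\Vert$ near $f(c)$, and substitutes $q=f(\tilde{c})$. It then controls $\Vert f(\tilde{c})-f(c)\Vert$ by $\tilde{c}_{2}-c_{2}$. For that it uses the supporting-line inequality $p\cdot(\tilde{c}-c)>0$ together with $\tilde{c}_{1}<c_{1}$ (hence $\tilde{c}_{2}>c_{2}$), which gives $c_{1}-\tilde{c}_{1}<(p_{2}/p_{1})(\tilde{c}_{2}-c_{2})$.

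You instead note that $h((f(\tilde{c}),1),u(c))=\tilde{c}$. On $\mathcal{I}(c)$, the composite $\pi_{i}\circ f$ is therefore just the (nonnegative) signed distance from $\tilde{c}$ to the trade line $\mathcal{H}(c)$. A primal second-order expansion of $u$ then bounds this by a multiple of $\Vert\tilde{c}-c\Vert^{2}$ directly.

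Because you never need the Hessian of $\pi_{i}$, your route avoids two problems in the paper's proof:
\begin{itemize}
\item It does not need the nondegeneracy condition $\mathbf{dim}\,\mathbf{ker}\,\mathbf{H}_{11}e(p,v_{n}(p))=1$. The paper's proof uses this through Proposition \ref{propBoundsHessian}, but the theorem does not assume it.
\item It avoids the step where the paper writes $\Vert f(\tilde{c})-f(c)\Vert^{2}$ as $(\tilde{c}_{1}/\tilde{c}_{2}-c_{1}/c_{2})^{2}$. That identity is not valid for general $u$, since $f(x)=\partial_{1}u(x)/\partial_{2}u(x)$; it would have to be replaced by a Lipschitz bound on $f$.
\end{itemize}

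Your implicit-function step also works on both branches of the indifference curve, so you prove a slightly stronger statement without $\tilde{c}_{1}<c_{1}$. If you keep that hypothesis, the paper's one-line supporting-line argument can replace your Step 3 entirely.

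What the paper's route buys is a constant expressed through the curvature $\mu_{\max}(p)$ of the expenditure function, which ties the result to the deep-projection framework. Your constant is stated in primal terms, via $\max\Vert\mathbf{H}u\Vert$ and the slope of the indifference curve near $c$.
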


Theorems \ref{theoFundamentalIneq} and \ref{theoUpperBoundCass} support the statements of the local assumptions that ensure the sufficiency and necessity of the Cass criterion in the next section. The last result of this section is a lemma that will also prove useful.

\begin{lemma}\label{lemmaInclined}
    Let $L=2$, $c\in\mathbb{R}^{2}_{++}$ and $p=x_{n}^{-1}(c)$. Then, there are $\varepsilon, \beta>0$ such that if $c\in B(c,\varepsilon)$, $\tilde{c}_{1}<c_{1}$, $\tilde{c}_{2}-c_{2}\geq \beta(c_{1}-\tilde{c}_{1})$ and $p\cdot(\tilde{c}-c)>0$, then $u(\tilde{c})>u(c)$.
\end{lemma}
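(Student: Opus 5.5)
The plan is to reduce the claim to a first-order comparison between the slope of the indifference curve through $c$ and the slope $\beta$ of the admissible cone. I read the hypothesis ``$c\in B(c,\varepsilon)$'' as $\tilde{c}\in B(c,\varepsilon)$.

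First, by Proposition \ref{propBasicIdentities}(iii), $p=x_{n}^{-1}(c)=\nabla u(c)/c\cdot\nabla u(c)$, so $p$ is a positive multiple of $\nabla u(c)$. Hence $p_{1}/p_{2}=\partial_{1}u(c)/\partial_{2}u(c)=f(c)$.

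Since $u$ is smooth and $\partial_{2}u(c)>0$ (strict monotonicity together with the absence of corner solutions), the implicit function theorem gives $\delta>0$ and a smooth function $\varphi:(c_{1}-\delta,c_{1}+\delta)\rightarrow\mathbb{R}_{++}$ with the following properties:
\begin{itemize}
\item[(a)] $\varphi(c_{1})=c_{2}$ and $u(s,\varphi(s))=u(c)$;
\item[(b)] $\varphi^{\prime}(s)=-f(s,\varphi(s))$;
\item[(c)] in a neighborhood $U$ of $c$, the points of $\mathcal{I}(c)$ are exactly the graph of $\varphi$.
\end{itemize}
In particular $\varphi^{\prime}(c_{1})=-p_{1}/p_{2}$.

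Next, fix $\beta=2p_{1}/p_{2}$; any $\beta>p_{1}/p_{2}$ would work. By continuity of $\varphi^{\prime}$, shrink $\delta$ so that $|\varphi^{\prime}(s)|<\beta$ for all $s\in(c_{1}-\delta,c_{1}]$. Now take $\varepsilon\leq\delta$ with $B(c,\varepsilon)\subset U\cap\mathbb{R}^{2}_{++}$.

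Let $\tilde{c}\in B(c,\varepsilon)$ satisfy $\tilde{c}_{1}<c_{1}$ and $\tilde{c}_{2}-c_{2}\geq\beta(c_{1}-\tilde{c}_{1})$. By the mean value theorem,
\begin{eqnarray*}
\varphi(\tilde{c}_{1})-c_{2}=-\varphi^{\prime}(\xi)(c_{1}-\tilde{c}_{1})<\beta(c_{1}-\tilde{c}_{1})\leq\tilde{c}_{2}-c_{2},
\end{eqnarray*}
for some $\xi\in(\tilde{c}_{1},c_{1})$; the strict inequality uses $c_{1}-\tilde{c}_{1}>0$. Thus $\tilde{c}_{2}>\varphi(\tilde{c}_{1})$. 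Since $u(\tilde{c}_{1},\varphi(\tilde{c}_{1}))=u(c)$ and $u$ is strictly increasing in the second coordinate, we conclude $u(\tilde{c})>u(c)$.

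With this choice of $\beta$, the hypothesis $p\cdot(\tilde{c}-c)>0$ is not actually needed. Indeed, it is implied by the cone condition, since $p_{2}\beta>p_{1}$. It is included in the statement presumably for how the lemma is used in Section 3.

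The only delicate step is choosing $\varepsilon$ small enough that two things hold. First, $\varphi$ must be defined on $(c_{1}-\varepsilon,c_{1}]$ and describe $\mathcal{I}(c)$ locally. Second, the slope bound $|\varphi^{\prime}|<\beta$ must hold uniformly on that interval. Both follow from continuity of $\nabla u$ and $\partial_{2}u(c)>0$, so I expect no genuine obstacle.

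Quasi-concavity is not needed here. Alternatively, one could invoke it, since it makes $\varphi$ convex and hence $|\varphi^{\prime}|\leq p_{1}/p_{2}$ on the left of $c_{1}$. That version works with any $\beta>p_{1}/p_{2}$ without further shrinking $\varepsilon$, but it is not required.
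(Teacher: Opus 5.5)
Your argument is correct (reading the hypothesis as $\tilde{c}\in B(c,\varepsilon)$, as the paper's own proof does), but it takes a different route from the paper.

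\textbf{The paper's route.} It argues by contradiction. It takes $\tilde{c}_{n}\rightarrow c$ violating the claim with $\beta=n$ and normalizes $y_{n}=(\tilde{c}_{n}-c)/\Vert\tilde{c}_{n}-c\Vert\rightarrow y$. The first-order expansion of $u$ at $c$ then gives $p\cdot y\leq0$. The cone condition $\tilde{c}_{n2}-c_{2}\geq n(c_{1}-\tilde{c}_{n1})$ forces $y_{1}=0$, so $y=(0,1)$ and $p_{2}\leq0$, which is absurd. This needs only differentiability of $u$ at the single point $c$ and $p_{2}>0$. It is non-constructive, however: it effectively lets $\beta$ grow so that the cone collapses toward the vertical direction.

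\textbf{Your route.} The implicit-function and mean-value argument uses $C^{1}$ regularity of $u$ near $c$, which is available since $u$ is smooth. It also uses monotonicity in the second coordinate; if ``strictly increasing'' is read only as $y\gg x\Rightarrow u(y)>u(x)$, use $\partial_{2}u>0$ near $c$ instead. In exchange it is quantitative. Any $\beta>p_{1}/p_{2}$ works, with $\varepsilon$ controlled by the continuity of $f$ near $c$. It also shows the hypothesis $p\cdot(\tilde{c}-c)>0$ is redundant, since $p\cdot(\tilde{c}-c)\geq(p_{2}\beta-p_{1})(c_{1}-\tilde{c}_{1})>0$. The paper's proof does not need that hypothesis either, because $p\cdot y\leq0$ together with $y=(0,1)$ already contradicts $p_{2}>0$. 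Tying $\beta^{h}$ to the marginal rate of substitution is also more informative about when the uniform bound in Assumption \ref{assBallEpsilon} is plausible.

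\textbf{A correction to your closing aside.} Convexity of $\varphi$ gives the opposite inequality from the one you state. Since $\varphi^{\prime}$ is nondecreasing, for $s<c_{1}$ you get $\varphi^{\prime}(s)\leq\varphi^{\prime}(c_{1})=-p_{1}/p_{2}$, that is, $\vert\varphi^{\prime}(s)\vert\geq p_{1}/p_{2}$. The indifference curve steepens to the left of $c$: for $u=c_{1}c_{2}$ at $c=(1,1)$, $\varphi(s)=1/s$, and the chord slope is below $\beta$ only when $s>1/\beta$. So for each fixed $\beta>p_{1}/p_{2}$, shrinking $\varepsilon$ via continuity of $\varphi^{\prime}$ is genuinely needed, and quasi-concavity gives no shortcut. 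Drop that remark; your main proof does not depend on it.
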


\section{The local nature of the Cass criterion}\label{sec3}
This section derives general statements for the sufficiency and necessity of the Cass criterion with possibly unbounded dynamics for the demography and per capita endowments. The assumptions are made ``on demand'' in a step-by-step proof that follows closely the reasoning by \textcite{Benveniste1976}.

The economy $\mathcal{E}$ is a consumption-loan overlapping generations one with discrete time periods $t\in\mathbb{N}$ and $L/2\geq1$ perishable commodities in each period. Households are indexed by $h\in \mathbb{N}$ and gathered in generations $G_{t}=\{h\in\mathbb{N}\mid h \textrm{ is born in period }t\}$, $t\geq0$, according to their period of birth. All households live for two periods, the one they are born into and the next, except those from generation $G_{0}$ that only live period $t=1$. Generation $G_{t}$, $t\geq0$, has a finite number of households given by $H_{t}\in\mathbb{N}$. Household $h\in G_{t}$, $t\geq1$, is defined through a consumption set $X^{h}=\mathbb{R}^{L}_{+}$ and a smooth, strictly increasing, and strictly quasiconcave utility function $u^{h}:\mathbb{R}^{L}_{+}\rightarrow\mathbb{R}$. Also, household $h\in G_{0}$ is defined through a consumption set $X^{h}=\mathbb{R}^{L/2}_{+}$ and a smooth, strictly increasing, and strictly quasiconcave utility function $u^{h}:\mathbb{R}^{L/2}_{+}\rightarrow\mathbb{R}$.

An \textit{allocation} for this economy is $c=\{c^{h}\}_{h\geq1}\in\mathbb{R}^{\infty}_{++}$ and the set of \textit{Pareto optimal allocations} is given by
\begin{eqnarray*}
    \mathcal{P}=\{c\in\mathbb{R}^{\infty}_{++}\mid \nexists \tilde{c}\in\mathbb{R}^{\infty}_{++}\textrm{ s.t. }\sum_{h\in G_{t-1}}(\tilde{c}^{h}_{t}-c^{h}_{t})+\sum_{h\in G_{t}}(\tilde{c}^{h}_{t}-c^{h}_{t})=0,t\geq1,\textrm{ and }\\
    u^{h}(\tilde{c}^{h})\geq u^{h}(c^{h}), h\geq1,\textrm{ with at least one strict inequality}\}.
\end{eqnarray*}

In particular, notice that this definition of $\mathcal{P}$ does not use any concept of \textit{feasibility} and focuses only on the impossibility of utility-enhancing \textit{redistribution} of resources (i.e., it is possible to have $c,\tilde{c}\in\mathcal{P}$ and $u^{h}(\tilde{c}^{h})>u^{h}(c^{h})$, $h\geq1$). 

The set of \textit{weakly Pareto optimal allocations} \parencite[Definition 2.5, p. 286]{BalaskoShell_1980} is given by
\begin{eqnarray*}
    \mathcal{W}=\biggr\{c\in\mathbb{R}^{\infty}_{++}\mid \exists p=(p_{1},\ldots)\in\mathbb{R}^{\infty}_{++}\textrm{ s.t. } \frac{p_{1}}{\Vert p_{1}\Vert}=\frac{x^{-1}_{nh}(c^{h})}{\Vert x^{-1}_{nh}(c^{h})\Vert}, h\in G_{0}, 
    \textrm{ and }\\
    \frac{(p_{t},p_{t+1})}{\Vert (p_{t},p_{t+1})\Vert}=\frac{x^{-1}_{nh}(c^{h})}{\Vert x^{-1}_{nh}(c^{h})\Vert}, h\in G_{t}, t\geq1\biggr\}.
\end{eqnarray*}

For $c\in\mathcal{W}$\footnote{\textcite[Lemma 2.8, p. 286]{BalaskoShell_1980} have shown that $\mathcal{W}$ coincides with the set of short-run Pareto optimal allocations, which is the set of allocations that cannot be Pareto improved by a redistribution of resources limited to a finite number of periods.}, $p\in\mathbb{R}^{\infty}_{++}$ satisfying the definition above is called a \textit{supporting price sequence}. The equilibrium existence result from \textcite{BalaskoCassShell_1980} implies that $\mathcal{P}\subseteq\mathcal{W}$ and it is well-known that in this overlapping generations setting the First Welfare Theorem fails and, therefore, $\mathcal{P}\subsetneq\mathcal{W}$. 

The aim of this section is to provide a step-by-step proof of general statements for the sufficiency and necessity of the Cass criterion, which highlights the assumptions behind it. An assumption will be called \textit{pointwise} if it depends only on the allocation $c\in \mathcal{W}$ and the supporting price sequence $p\in\mathbb{R}^{\infty}_{++}$. It will be called \textit{local} if it depends on the behavior of the indifference hypersurfaces in a neighborhood of each $c^{h}$, $h\geq1$. I start with the following two local assumptions.

\begin{assumption}[Local]\label{assParetoDominated}
    Let $\{c^{h}\}_{h\geq1}\in\mathcal{W}$. If $\{c^{h}\}_{h\geq1}\notin\mathcal{P}$, then for all $\varepsilon>0$, there is an allocation $\{\tilde{c}^{h}\}_{h\geq1}$ such that: (i) $\tilde{c}^{h}\in B(0,\varepsilon \Vert c^{h}_{t}\Vert)$, $h\in G_{t}$, $t\geq1$; (ii) $u^{h}(\tilde{c}^{h})>u^{h}(c^{h})$, $h\in G_{0}$, and $u^{h}(\tilde{c}^{h})=u^{h}(c^{h})$, $h\in G_{t}$, $t\geq1$; and (iii) $\sum_{h\in G_{t-1}}(\tilde{c}^{h}_{t}-c^{h}_{t})+\sum_{h\in G_{t}}(\tilde{c}^{h}_{t}-c^{h}_{t})=0$, $t\geq1$.
\end{assumption}

\begin{assumption}[Local]\label{assUniformEpsilon}
    Let $\{c^{h}\}_{h\geq1}\in\mathcal{W}$. Then, there is a sequence $\{\varepsilon^{h}_{1}\}_{h\geq1}$ such that: (i) $\pi^{h}_{i}(f^{h}(\tilde{c}^{h}))\geq \varepsilon^{h}_{1}\Vert\tilde{c}^{h}-c^{h}\Vert^{2}$, for $\tilde{c}^{h}\in B(c^{h},\varepsilon^{h}_{1} \Vert c^{h}_{t}\Vert)\subseteq\mathbb{R}^{L}_{++}$, $u(\tilde{c}^{h})=u(c^{h})$; and (ii) $\inf\{\varepsilon^{h}_{1}\}_{h\in G_{t},\, t\geq1}>0$.
\end{assumption}

Assumption \ref{assParetoDominated} implies that every weakly Pareto optimal allocation that is not Pareto optimal can be improved by ``bringing consumption from the far future'' to the older generation in $t=1$ while leaving all other generations indifferent to the relatively small shifts in their consumption bundles. Assumption \ref{assUniformEpsilon} can be seen as placing a lower bound on the sequence $\{\varepsilon^{h}_{1}\}_{h\in G_{t},\, t\geq1}$ given by Theorem \ref{theoFundamentalIneq} with $K=\Vert c^{h}_{t}\Vert$, $h\in G_{t}$, $t\geq1$.

Let $\{c^{h}\}_{h\geq1}\in\mathcal{W}$ and $p\in\mathbb{R}^{\infty}_{++}$ be a supporting price sequence. I follow the reasoning of \textcite{Benveniste1976} to reach a general statement for the sufficiency of the Cass criterion. If $\{c^{h}\}_{h\geq1}\notin\mathcal{P}$, Assumptions \ref{assParetoDominated}--\ref{assUniformEpsilon} allow us to take $0<\varepsilon<\inf\{\varepsilon^{h}_{1}\}_{h\geq1}$ and find $\{\tilde{c}^{h}\}_{h\geq1}$ such that: (i) $\tilde{c}^{h}\in B(c^{h},\varepsilon \Vert c^{h}_{t}\Vert)\subseteq B(c^{h},\varepsilon^{h}_{1} \Vert c^{h}_{t}\Vert)$, $h\in G_{t}$, $t\geq1$; (ii) $u^{h}(\tilde{c}^{h})>u^{h}(c^{h})$, $h\in G_{0}$, and $u^{h}(\tilde{c}^{h})=u^{h}(c^{h})$, $h\in G_{t}$, $t\geq1$; and (iii) $\sum_{h\in G_{t-1}}(\tilde{c}^{h}_{t}-c^{h}_{t})+\sum_{h\in G_{t}}(\tilde{c}^{h}_{t}-c^{h}_{t})=0$, $t\geq1$. Then, Assumption \ref{assUniformEpsilon} implies 
\begin{eqnarray}\label{eqIneqIndiff}
    \pi^{h}_{i}(f^{h}(\tilde{c}^{h}))\geq \varepsilon\Vert \tilde{c}^{h}-c^{h}\Vert^{2}\geq \varepsilon\Vert \tilde{c}^{h}_{t}-c^{h}_{t}\Vert^{2}\geq \frac{\varepsilon}{\Vert p_{t}\Vert^{2}}(p_{t}\cdot(\tilde{c}^{h}_{t}-c^{h}_{t}))^{2},
\end{eqnarray}
for $h\in G_{t}$, $t\geq1$. Let $\delta^{t}_{1}=\sum_{h\in G_{t}}p_{t}\cdot(\tilde{c}^{h}_{t}-c^{h}_{t})$ and $\delta^{t}_{2}=\sum_{h\in G_{t-1}}p_{t}\cdot(\tilde{c}^{h}_{t}-c^{h}_{t})$, $t\geq1$. Notice that $\delta^{t}_{1}+\delta^{t}_{2}=0$ and $\delta^{t}_{1}+\delta^{t+1}_{2}\geq0$, for $t\geq1$.  Therefore, $\delta^{t+1}_{2}\geq\delta^{t}_{2}$ and $\delta^{t+1}_{1}\leq\delta^{t}_{1}$, for $t\geq1$. Furthermore, since households from $G_{0}$ live only in $t=1$ and $u^{h}(\tilde{c}^{h})>u^{h}(c^{h})$, $h\in G_{0}$, we have $\delta^{1}_{2}=\sum_{h\in G_{0}}p_{1}\cdot(\tilde{c}^{h}_{1}-c^{h}_{1})=\sum_{h\in G_{0}}p_{1}\cdot(\tilde{c}^{h}-c^{h})>0$. We conclude that $\{\delta^{t}_{2}\}_{t\geq1}$ is a non-decreasing sequence of positive numbers. Analogously, $\{\delta^{t}_{1}\}_{t\geq1}$ is a non-increasing sequence of negative numbers.

Next, notice that
\begin{eqnarray*}
    \delta^{t}_{1}-\delta^{t+1}_{1}=(p_{t},p_{t+1})\cdot\sum_{h\in G_{t}}(\tilde{c}^{h}-c^{h})=\Vert (p_{t},p_{t+1})\Vert \sum_{h\in G_{t}}\pi^{h}_{i}(f^{h}(\tilde{c}^{h})),
\end{eqnarray*}
for $t\geq1$. Then, (\ref{eqIneqIndiff}) implies that 
\begin{eqnarray}\label{eqInvertible}
   \delta^{t}_{1}-\delta^{t+1}_{1}=-\vert \delta^{t}_{1}\vert+\vert\delta^{t+1}_{1}\vert\geq\frac{\varepsilon\Vert (p_{t},p_{t+1})\Vert}{\Vert p_{t}\Vert^{2}}\sum_{h\in G_{t}}(p_{t}\cdot(\tilde{c}^{h}_{t}-c^{h}_{t}))^{2}
    \geq \frac{\varepsilon\Vert (p_{t},p_{t+1})\Vert}{H_{t}\Vert p_{t}\Vert^{2}}(\delta^{t}_{1})^{2},
\end{eqnarray}
for $t\geq1$, where the last inequality is due to Cauchy-Schwarz. To ease notation, let $\gamma_{t}=\varepsilon\Vert (p_{t},p_{t+1})\Vert/H_{t}\Vert p_{t}\Vert^{2}$, $t\geq1$. Then, rearranging and taking the inverse of the terms in (\ref{eqInvertible}) we obtain
\begin{eqnarray*}
    \frac{1}{\vert \delta^{t+1}_{1}\vert}\leq \frac{1}{\vert\delta^{t}_{1}\vert(1+\gamma_{t}\vert\delta^{t}_{1})\vert}=\frac{1}{\vert\delta^{t}_{1}\vert}-\frac{\gamma_{t}}{1+\gamma_{t}\vert \delta^{t}_{1}\vert},
\end{eqnarray*}
for $t\geq1$, and, therefore,
\begin{eqnarray}\label{eqConvergence}
    \sum^{\infty}_{t=1}\frac{\gamma_{t}}{1+\gamma_{t}\vert \delta^{t}_{1}\vert}\leq \lim_{T\rightarrow\infty} \sum^{T}_{t=1}\biggr(\frac{1}{\vert \delta^{t}_{1}\vert}-\frac{1}{\vert\delta^{t+1}_{1}\vert}\biggr)=\frac{1}{\vert \delta^{1}_{1}\vert}-\lim_{T\rightarrow\infty}\frac{1}{\vert\delta^{T+1}_{1}\vert}<\infty.
\end{eqnarray}

I proceed with the following pointwise assumption.

\begin{assumption}[Pointwise]\label{assPricesGrowth}
    Let $\{c^{h}\}_{h\geq1}\in\mathcal{W}$ and $p\in\mathbb{R}^{\infty}_{++}$ be a supporting price sequence. Then, there is $\sigma>0$ such that $\sup_{t\geq1}\Vert p_{t+1}\Vert/\Vert p_{t}\Vert\leq \sigma$.
\end{assumption}

Assumption \ref{assPricesGrowth} requires that $\Vert \nabla_{t+1}u(c^{h})\Vert/\Vert\nabla_{t}u(c^{h})\Vert\leq\sigma$, $h\in G_{t}$, $t\geq1$, so that the interperiod substitution rates based on the reference bundle $(1,\ldots,1)\in\mathbb{R}^{L/2}_{++}$ are uniformly bounded. To ease notation, let $E_{t}=\sum_{h\in G_{t}}\Vert c^{h}_{t}\Vert/H_{t}$, $t\geq1$. Then,
\begin{eqnarray}\label{eqBoundsRatio}
    \frac{\gamma_{t}\vert \delta^{t}_{1}\vert}{E_{t}}&=&\frac{\varepsilon\Vert (p_{t},p_{t+1})\Vert}{\Vert p_{t}\Vert}\frac{1}{H_{t}E_{t}}\biggr\vert\sum_{h\in G_{t}}\frac{p_{t}}{\Vert p_{t}\Vert}\cdot(\tilde{c}^{h}_{t}-c^{h}_{t})\biggr\vert\nonumber\\
    % &\leq&\varepsilon\sqrt{1+\frac{\Vert p_{t+1}\Vert^{2}}{\Vert p_{t}\Vert^{2}}}\frac{1}{H_{t}E_{t}}\sum_{h\in G_{t}}\biggr\vert\frac{p_{t}}{\Vert p_{t}\Vert}\cdot(\tilde{c}^{h}_{t}-c^{h}_{t})\biggr\vert\nonumber\\
    &\leq&\varepsilon\sqrt{1+\frac{\Vert p_{t+1}\Vert^{2}}{\Vert p_{t}\Vert^{2}}}\frac{1}{H_{t}E_{t}}\sum_{h\in G_{t}}\Vert \tilde{c}^{h}_{t}-c^{h}_{t}\Vert\nonumber\\
    % &\leq&\varepsilon\sqrt{1+\frac{\Vert p_{t+1}\Vert^{2}}{\Vert p_{t}\Vert^{2}}}\frac{1}{H_{t}E_{t}}\sum_{h\in G_{t}}\Vert \tilde{c}^{h}-c^{h}\Vert\nonumber\\
    &\leq&\varepsilon^{2}\sqrt{1+\frac{\Vert p_{t+1}\Vert^{2}}{\Vert p_{t}\Vert^{2}}}\frac{1}{H_{t}E_{t}}\sum_{h\in G_{t}}\Vert c^{h}_{t}\Vert\nonumber\\
    &\leq&\varepsilon^{2}\sqrt{1+\sigma^{2}},
\end{eqnarray}
for $t\geq1$, where the penultimate inequality is due to Assumption \ref{assParetoDominated} and the last inequality is due to Assumption \ref{assPricesGrowth}. I proceed with the following pointwise assumption.

\begin{assumption}[Pointwise]\label{assNonvanishingFirstPeriod}
    Let $\{c^{h}\}_{h\geq1}\in\mathcal{W}$. Then, $\inf\{\sum_{h\in G_{t}}\Vert c^{h}_{t}\Vert/H_{t}\}_{t\geq1}>0$.
\end{assumption}

Assumption \ref{assNonvanishingFirstPeriod} requires that the average first-period bundle does not vanish over time. Then, (\ref{eqConvergence}), (\ref{eqBoundsRatio}), and Assumption \ref{assNonvanishingFirstPeriod} allow us to write 
\begin{eqnarray*}
    \sum^{\infty}_{t=1}\frac{\gamma_{t}}{1+\gamma_{t}\vert \delta^{t}_{1}\vert}&=&\sum^{\infty}_{t=1}\frac{1}{1/E_{t}+\gamma_{t}\delta^{t}_{2}/E_{t}}\frac{\varepsilon\Vert (p_{t},p_{t+1})\Vert}{\Vert p_{t}\Vert}\frac{1}{E_{t}H_{t}\Vert p_{t}\Vert}\\
    &\geq&\frac{\varepsilon}{1/\inf\{E_{t}\}_{t\geq1}+\varepsilon^{2}\sqrt{1+\sigma^{2}}}\sum^{\infty}_{t=1}\frac{1}{E_{t}H_{t}\Vert p_{t}\Vert},
\end{eqnarray*}
and, therefore, $\sum^{\infty}_{t=1}1/E_{t}H_{t}\Vert p_{t}\Vert<\infty$. We state this as the following theorem.

\begin{theorem}\label{theoCassSufficiency}
    Let $\{c^{h}\}_{h\geq1}\in\mathcal{W}$ and $p\in\mathbb{R}^{\infty}_{++}$ be a supporting price sequence. Under Assumptions \ref{assParetoDominated}--\ref{assNonvanishingFirstPeriod}, if $\sum^{\infty}_{t=1}1/(\Vert p_{t}\Vert \sum_{h\in G_{t}}\Vert c^{h}_{t}\Vert)=\infty$, then $\{c^{h}\}_{h\geq1}\in \mathcal{P}$. 
\end{theorem}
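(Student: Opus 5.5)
The plan is to argue by contraposition, following the chain of inequalities already assembled before the statement. Suppose $\{c^{h}\}_{h\geq1}\in\mathcal{W}$ with supporting prices $p$, that Assumptions \ref{assParetoDominated}--\ref{assNonvanishingFirstPeriod} hold, and that $\{c^{h}\}_{h\geq1}\notin\mathcal{P}$. The goal is to show $\sum_{t}1/(\Vert p_{t}\Vert\sum_{h\in G_{t}}\Vert c^{h}_{t}\Vert)<\infty$. Since $E_{t}H_{t}=\sum_{h\in G_{t}}\Vert c^{h}_{t}\Vert$, this is exactly $\sum_{t}1/(E_{t}H_{t}\Vert p_{t}\Vert)<\infty$.

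\textbf{Step 1: construct the improving allocation.} Fix $0<\varepsilon<\inf\{\varepsilon^{h}_{1}\}$, which is possible by Assumption \ref{assUniformEpsilon}(ii). Apply Assumption \ref{assParetoDominated} with this $\varepsilon$ to obtain $\{\tilde{c}^{h}\}$ satisfying (i)--(iii). One point needs care: Assumption \ref{assParetoDominated}(i) is written with $B(0,\cdot)$, whereas the argument needs $\tilde{c}^{h}\in B(c^{h},\varepsilon\Vert c^{h}_{t}\Vert)$. I will read it as the latter, as the paper's own derivation does, so that Assumption \ref{assUniformEpsilon}(i) applies to every $h\in G_{t}$ and yields (\ref{eqIneqIndiff}).

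\textbf{Step 2: monotonicity of the transfers.} Define $\delta^{t}_{1}$ and $\delta^{t}_{2}$ as in the text. Market clearing gives $\delta^{t}_{1}+\delta^{t}_{2}=0$. For $h\in G_{t}$ we have $u^{h}(\tilde{c}^{h})=u^{h}(c^{h})$, and $(p_{t},p_{t+1})$ is proportional to $x_{nh}^{-1}(c^{h})$, which in turn is proportional to $\nabla u^{h}(c^{h})$. By strict quasiconcavity this gives
\[
(p_{t},p_{t+1})\cdot(\tilde{c}^{h}-c^{h})=\Vert(p_{t},p_{t+1})\Vert\,\pi^{h}_{i}(f^{h}(\tilde{c}^{h}))\geq0 .
\]
Indeed, $h^{h}((q,1),u^{h}(c^{h}))$ evaluated at $q=f^{h}(\tilde{c}^{h})$ is $\tilde{c}^{h}$ itself, because $\tilde{c}^{h}$ is the expenditure minimizer at its own supporting prices. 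For $G_{0}$, $\delta^{1}_{2}>0$ follows from the strict improvement at supporting prices, as in the paper's argument.

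\textbf{Step 3: the telescoping bound.} Combine (\ref{eqIneqIndiff}) with Cauchy--Schwarz to get (\ref{eqInvertible}). Inverting then yields the telescoping estimate (\ref{eqConvergence}), which is finite because $|\delta^{t}_{1}|\geq|\delta^{1}_{1}|>0$.

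\textbf{Step 4: from (\ref{eqConvergence}) to the target series.} Bound $\gamma_{t}|\delta^{t}_{1}|/E_{t}$ uniformly using Assumptions \ref{assParetoDominated} and \ref{assPricesGrowth}, as in (\ref{eqBoundsRatio}). Also use $\Vert(p_{t},p_{t+1})\Vert\geq\Vert p_{t}\Vert$, so that
\[
\gamma_{t}\geq\frac{\varepsilon}{H_{t}\Vert p_{t}\Vert}.
\]
Writing $\gamma_{t}/(1+\gamma_{t}|\delta^{t}_{1}|)=(\gamma_{t}/E_{t})/(1/E_{t}+\gamma_{t}|\delta^{t}_{1}|/E_{t})$ and using $1/E_{t}\leq1/\inf E_{t}$ from Assumption \ref{assNonvanishingFirstPeriod}, each term of (\ref{eqConvergence}) dominates a fixed positive constant times $1/(E_{t}H_{t}\Vert p_{t}\Vert)$. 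Comparison then gives a finite sum, contradicting the hypothesis of the theorem. Hence $\{c^{h}\}\in\mathcal{P}$.

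\textbf{Main obstacle.} I expect the delicate step to be Step 2 together with the identity linking $(p_{t},p_{t+1})\cdot(\tilde{c}^{h}-c^{h})$ to the deep projection. This requires checking that $\nabla u^{h}(c^{h})/\Vert\nabla u^{h}(c^{h})\Vert$ equals $(p_{t},p_{t+1})/\Vert(p_{t},p_{t+1})\Vert$, which holds because $x_{nh}^{-1}(c^{h})\propto\nabla u^{h}(c^{h})$ by Proposition \ref{propBasicIdentities}(iii). A smaller point in Step 4 is that the $E_{t}$ factor must appear in the denominator correctly, since the bound in (\ref{eqBoundsRatio}) is stated for $\gamma_{t}|\delta^{t}_{1}|/E_{t}$ rather than for $\gamma_{t}|\delta^{t}_{1}|$.
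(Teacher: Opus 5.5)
Your proposal is correct and follows essentially the same route as the paper. Both argue by contraposition via Assumption \ref{assParetoDominated} (read with $B(c^{h},\varepsilon\Vert c^{h}_{t}\Vert)$, as the paper's own derivation does), use the monotone transfer sequences $\delta^{t}_{1},\delta^{t}_{2}$ and the telescoping bound (\ref{eqConvergence}), and finish by comparison through (\ref{eqBoundsRatio}) and Assumption \ref{assNonvanishingFirstPeriod}. Your explicit check that household $h$'s Hicksian demand at prices $(f^{h}(\tilde{c}^{h}),1)$ and utility $u^{h}(c^{h})$ is $\tilde{c}^{h}$ itself gives $(p_{t},p_{t+1})\cdot(\tilde{c}^{h}-c^{h})=\Vert(p_{t},p_{t+1})\Vert\,\pi^{h}_{i}(f^{h}(\tilde{c}^{h}))\geq0$, which fills in a step the paper uses without comment.
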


One can build examples in which the Cass criterion fails when we drop assumptions of Theorem \ref{theoCassSufficiency}. The following one is due to \textcite[p. 805]{OkunoZilcha_1980}.

\begin{example}
Let $L=2$ and $H_{t}=1$, $t\geq0$, with $u^{0}(c_{1})=\ln c_{1}$, $u^{1}(c_{1},c_{2})= \ln c_{1}+\ln c_{2}$ and $u^{t}(c_{t},c_{t+1})=c_{t}^{1-t^{-2}}+c_{t+1}^{1-t^{-2}}$, $t\geq2$. The allocation is $c^{0}=1$ and $c^{t}=(3,1)$, and a supporting price sequence is given by $p_{1}=1$ and $p_{t+1}=3^{t^{-2}} p_{t}$, $t\geq1$. Therefore, $p_{t}=3^{\sum^{t-1}_{n=1}n^{-2}}$, $t\geq2$, and, since $\lim_{t\rightarrow\infty}p_{t}=3^{\pi^{2}/6}<\infty$, we have $\sum^{\infty}_{t=1} 1/p_{t}=\infty$. This allocation, however, is Pareto dominated by $c^{0}=2$ and $c^{t}=(2,2)$, $t\geq1$. 

After some careful thought, one may notice that Assumptions \ref{assParetoDominated}, \ref{assPricesGrowth}--\ref{assNonvanishingFirstPeriod} are satisfied. However, Assumption \ref{assUniformEpsilon} is not. This is because the indifference curves are becoming flatter each time around $(3,1)$, since $\lim_{t\rightarrow\infty} u^{t}(y_{1},y_{2})=y_{1}+y_{2}$, $(y_{1},y_{2})\in\mathbb{R}^{2}_{+}$. Therefore, the indifference hypersurfaces are approaching the trade hyperplanes (see Figure \ref{FigCanonical}), and this implies that the deep projections are getting uniformly closer to zero each time in any fixed neighborhood of $(3,1)$, so Assumption \ref{assUniformEpsilon} cannot be valid.
\end{example}

For a converse of Theorem \ref{theoCassSufficiency}, I suppose first that $L=2$ and follow, once again, the reasoning in \textcite{Benveniste1976}. Let $\{c^{h}\}_{h\geq1}\in\mathcal{W}$, $p\in\mathbb{R}^{\infty}_{++}$ a supporting price sequence, and $\sum^{\infty}_{t=1}1/(p_{t}\sum_{h\in G_{t}}c^{h}_{t})=L<\infty$. Since $\lim_{t\rightarrow\infty}p_{t}\sum_{h\in G_{t}}c^{h}_{t}=\infty$, there is $\varepsilon\in (0,1)$ such that $p_{t}\sum_{h\in G_{t}}c^{h}_{t}>\varepsilon$, $t\geq 1$. For $\delta_{1}, M>0$, let
\begin{eqnarray}\label{eqDeltaDef}
    p_{t+1}\delta_{t+1}=p_{t}\delta_{t}+\frac{M}{p_{t}\sum_{h\in G_{t}}c^{h}_{t}}(p_{t}\delta_{t})^{2},
\end{eqnarray}
for $t\geq1$. By inverting and rearranging terms, we obtain
\begin{eqnarray*}
    \frac{1}{p_{t+1}\delta_{t+1}}=\frac{1}{p_{t}\delta_{t}}-\frac{M}{p_{t}\sum_{h\in G_{t}}c^{h}_{t}+Kp_{t}\delta_{t}}\geq \frac{1}{p_{t}\delta_{t}}-\frac{M}{p_{t}\sum_{h\in G_{t}}c^{h}_{t}},
\end{eqnarray*}
so that
\begin{eqnarray*}
    \frac{1}{p_{1}\delta_{1}}-\frac{1}{p_{t}\delta_{t}}=\sum_{i=1}^{t-1}\frac{1}{p_{i}\delta_{i}}-\frac{1}{p_{i+1}\delta_{i+1}}\leq LM<\infty,
\end{eqnarray*}
for $t\geq2$. Therefore, for $0<\delta_{1}< \varepsilon^{3}/p_{1}(1+\varepsilon^{3} LM)$, we have
\begin{eqnarray}
    \frac{1}{p_{t}\delta_{t}}\geq\frac{1}{p_{1}\delta_{1}}-LM\geq\frac{1+\varepsilon^{3} LM}{\varepsilon^{3}}-LM=\frac{1}{\varepsilon^{3}}&\implies& p_{t}\delta_{t}\leq \varepsilon^{3}< \varepsilon^{2}p_{t}\sum_{h\in G_{t}}c^{h}_{t}\nonumber\\
    &\implies&\delta_{t}<\varepsilon^{2}\sum_{h\in G_{t}}c^{h}_{t},\label{eqBoundDeltaT}
\end{eqnarray}
for $t\geq2$. Furthermore, we can also assume $\delta_{1}<\varepsilon^{2}\sum_{h\in G_{1}}c^{h}_{1}$, so that (\ref{eqBoundDeltaT}) becomes valid for $t\geq1$. I proceed with the following pointwise assumption.

\begin{assumption}[Pointwise]\label{assEndowmentGrowth}
    Let $\{c^{h}\}_{h\geq1}$ be an allocation. Then, 
    \begin{eqnarray*}
        \inf_{t\geq1} \frac{\sum_{h\in G_{t}}c^{h}_{t}}{\sum_{h\in G_{t}}c^{h}_{t}+\sum_{h\in G_{t+1}}c^{h}_{t+1}}>0.
    \end{eqnarray*}
\end{assumption}
Assumption \ref{assEndowmentGrowth} requires that first period endowments cannot have arbitrarily large growth rates. Under Assumption \ref{assEndowmentGrowth}, we can take, without loss of generality, 
\begin{eqnarray}\label{eqEndowmentGrowth}
\varepsilon<\inf_{t\geq1} \frac{\sum_{h\in G_{t}}c^{h}_{t}}{\sum_{h\in G_{t}}c^{h}_{t}+\sum_{h\in G_{t+1}}c^{h}_{t+1}}.
\end{eqnarray}

Next, let $\lambda^{h}=c^{h}_{t}/\sum_{i\in G_{t}}c^{i}_{t}\in (0,1)$, $\delta^{h}_{t}=\lambda^{h}\delta_{t}$, $\delta^{h}_{t+1}=\lambda^{h}\delta_{t+1}$, and $\tilde{c}^{h}=(c^{h}-\delta^{h}_{t},c^{h}+\delta^{h}_{t+1})$, $h\in G_{t}$, $t\geq1$. Notice that (\ref{eqBoundDeltaT}) and (\ref{eqEndowmentGrowth}) imply
\begin{eqnarray*}
    \Vert \tilde{c}^{h}-c^{h}\Vert=\lambda^{h}\Vert(\delta_{t},\delta_{t+1})\Vert
    <c^{h}_{t}\varepsilon^{2}\frac{\sum_{i\in G_{t}}c^{i}_{t}+\sum_{i\in G_{t+1}}c^{i}_{t+1}}{\sum_{i\in G_{t}}c^{i}_{t}}
    <\varepsilon c^{h}_{t},
\end{eqnarray*}
for $h\in G_{t}$, $t\geq1$, and (\ref{eqDeltaDef}) implies
\begin{eqnarray*}
    p_{t+1}\delta^{h}_{t+1}=p_{t}\delta^{h}_{t}+\frac{M}{\lambda^{h} p_{t}\sum_{i\in G_{t}}c^{i}_{t}}(p_{t}\delta^{h}_{t})^{2}=p_{t}\delta^{h}_{t}+\frac{M}{c^{h}_{t}p_{t}}(p_{t}\delta^{h}_{t})^{2},
\end{eqnarray*}
which can also be written as
\begin{eqnarray}\label{eqPositiveProjection}
    \frac{(p_{t},p_{t+1})}{\Vert(p_{t},p_{t+1})\Vert}\cdot(\tilde{c}^{h}-c^{h})=\frac{M}{c^{h}_{t}}\frac{p_{t}}{\Vert (p_{t},p_{t+1})\Vert}(\tilde{c}_{t}^{h}-c_{t}^{h})^{2}>0,
\end{eqnarray}
for $h\in G_{t}$, $t\geq1$. Let
\begin{eqnarray*}
    \tilde{c}^{h}-c^{h}=\alpha^{h} \frac{(p_{t},p_{t+1})}{\Vert(p_{t},p_{t+1})\Vert}+\beta^{h} \frac{(-p_{t+1},p_{t})}{\Vert(p_{t},p_{t+1})\Vert},
\end{eqnarray*}
for $h\in G_{t}$, $t\geq1$, and notice that $\tilde{c}^{h}_{t}-c^{h}_{t}=-\delta^{h}_{t}<0$ implies $\beta^{h}>0$. Suppose $u(\tilde{c}^{h})\leq u(c^{h})$ for some $h\in G_{t}$, $t\geq1$. Then, (\ref{eqPositiveProjection}) implies that there is $\theta^{h}\in(0,1]$ such that
\begin{eqnarray*}
    u^{h}\biggr(\alpha^{h} \frac{(p_{t},p_{t+1})}{\Vert(p_{t},p_{t+1})\Vert}+\beta^{h}\theta^{h} \frac{(-p_{t+1},p_{t})}{\Vert(p_{t},p_{t+1})\Vert}\biggr)=u(\bar{c}^{h})=u(c^{h}),
\end{eqnarray*}
with $\Vert \bar{c}^{h}-c^{h}\Vert\leq \Vert \tilde{c}^{h}-c^{h}\Vert\leq \varepsilon c^{h}_{t}$ and, therefore, $\bar{c}^{h}\in B(c^{h},\varepsilon c^{h}_{t})$. I proceed with the following local assumption.

\begin{assumption}[Local]\label{assUniformEpsilon2}
    Let $\{c^{h}\}_{h\geq1}$ be an allocation. Then, there is a sequence $\{\varepsilon^{h}_{2}\}_{h\geq1}$ such that: (i) $\pi^{h}_{i}(f^{h}(\tilde{c}^{h}))\leq (\tilde{c}^{h}_{t+1}-c^{h}_{t+1})^{2}/\varepsilon^{h}_{2}$, for $\tilde{c}^{h}\in B(c^{h},\varepsilon^{h}_{2} c^{h}_{t})\subseteq\mathbb{R}^{2}_{++}$, $u(\tilde{c}^{h})=u(c^{h})$, $h\in G_{t}$, $t\geq1$; and (ii) $\inf\{\varepsilon^{h}_{2}\}_{h\in G_{t},\,t\geq1}>0$.
\end{assumption}

Assumption \ref{assUniformEpsilon2} can be seen as placing a lower bound on the sequence $\{\varepsilon^{h}_{2}\}_{h\in G_{t},\, t\geq1}$ given by Theorem \ref{theoUpperBoundCass} with $K=\Vert c^{h}_{t}\Vert$, $h\in G_{t}$, $t\geq1$. Under Assumption \ref{assUniformEpsilon2}, we can take, without loss of generality, $\varepsilon<\inf\{\varepsilon^{h}_{2}\}_{h\in G_{t},\, t\geq1}$. Since $\bar{c}^{h}\in B(c^{h},\varepsilon c^{h}_{t})\subseteq B(c^{h},\varepsilon^{h}_{2} c^{h}_{t})$, Assumption \ref{assUniformEpsilon2}, (\ref{eqPositiveProjection}), the fact that $\tilde{c}^{h}-\bar{c}^{h}\perp (p_{t},p_{t+1})$ and $c^{h}_{t+1}<\bar{c}^{h}_{t+1}<\tilde{c}^{h}_{t+1}$ allow us to write
\begin{eqnarray}
    \pi^{h}_{i}(f^{h}(\bar{c}^{h}))\leq \frac{(\bar{c}^{h}_{t+1}-c^{h}_{t+1})^{2}}{\varepsilon}&\implies& \frac{(p_{t},p_{t+1})}{\Vert(p_{t},p_{t+1})\Vert}\cdot (\bar{c}^{h}-c^{h})\leq \frac{(\bar{c}^{h}_{t+1}-c^{h}_{t+1})^{2}}{\varepsilon}\nonumber\\
    &\implies&\frac{M}{c^{h}_{t}}\frac{p_{t}}{\Vert (p_{t},p_{t+1})\Vert}(\tilde{c}_{t}^{h}-c_{t}^{h})^{2}\leq \frac{(\bar{c}^{h}_{t+1}-c^{h}_{t+1})^{2}}{\varepsilon}\nonumber\\
    &\implies&\frac{M}{c^{h}_{t}}\frac{p_{t}}{\Vert (p_{t},p_{t+1})\Vert}(\tilde{c}_{t}^{h}-c_{t}^{h})^{2}\leq \frac{(\tilde{c}^{h}_{t+1}-c^{h}_{t+1})^{2}}{\varepsilon}\nonumber\\
    &\implies&\frac{M}{c^{h}_{t}}\frac{p_{t}}{\Vert (p_{t},p_{t+1})\Vert}(\tilde{c}_{t}^{h}-c_{t}^{h})^{2}\leq \frac{(\tilde{c}^{h}_{t+1}-c^{h}_{t+1})^{2}}{\varepsilon}\nonumber\\
    &\implies& \sqrt{\frac{\varepsilon M p_{t}}{c^{h}_{t}\Vert (p_{t},p_{t+1})\Vert}}\leq \frac{\tilde{c}^{h}_{t+1}-c^{h}_{t+1}}{c^{h}_{t}-\tilde{c}^{h}_{t}}.\label{eqKBound}
\end{eqnarray}
I proceed with the following local and pointwise assumptions.

\begin{assumption}[Local]\label{assBallEpsilon}
    Let $\{c^{h}\}_{h\geq1}$ be an allocation. Then, there is a sequence $\{(\varepsilon^{h}_{3}, \beta^{h})\}_{h\geq1}$ such that: (i) $u^{h}(\tilde{c}^{h})>u^{h}(c^{h})$, for $\tilde{c}^{h}\in B(c^{h},\varepsilon^{h}_{3}c^{h}_{t})$, $\tilde{c}^{h}_{t}<c^{h}_{t}$, $\tilde{c}^{h}_{t+1}-c^{h}_{t+1}\geq \beta^{h}(c^{h}_{t}-\tilde{c}^{h}_{t})$ ; and (ii) $\inf \{\varepsilon^{h}_{3}\}_{h\in G_{t},t\geq1}>0$ and $\sup \{\beta^{h}\}_{h\geq1}<\infty$.
\end{assumption}

\begin{assumption}[Pointwise]\label{assSupConsumptionMarginalRates}
    Let $\{c^{h}\}_{h\geq1}$ be a weak Pareto optimal allocation and $p\in\mathbb{R}^{\infty}_{++}$ a supporting price sequence. Then, $\sup_{h\in G_{t},\,t\geq1}c^{h}_{t}\Vert(p_{t},p_{t+1})\Vert/p_{t}<\infty$.
\end{assumption}

Assumption \ref{assBallEpsilon} can be seen as placing uniform bounds on the sequence $\{(\varepsilon^{h}_{3},\beta^{h})\}_{h\in G_{t},\, t\geq1}$ given by Lemma \ref{lemmaInclined}, and we can take, without loss of generality, $\varepsilon<\inf\{\varepsilon^{h}_{3}\}_{h\in G_{t},\, t\geq1}$. Also, let $\beta=\sup\{\beta^{h}\}_{h\in G_{t},\,t\geq1}$. Assumption \ref{assSupConsumptionMarginalRates} then allows us to choose
\begin{eqnarray}\label{eqLastBound}
    M\geq \sup_{h\in G_{i},i\geq1}\frac{\beta^{2}c^{h}_{i}\Vert (p_{i},p_{i+1})\Vert}{\varepsilon p_{i}}\implies \beta\leq \sqrt{\frac{\varepsilon M p_{t}}{c^{h}_{t}\Vert (p_{t},p_{t+1})\Vert}},
\end{eqnarray}
for $h\in G_{t}$, $t\geq1$. Therefore, (i) in Assumption \ref{assBallEpsilon}, (\ref{eqKBound}), and (\ref{eqLastBound}) imply that $u^{h}(\tilde{c}^{h})>u^{h}(c^{h})$, absurd. We conclude that $u^{h}(\tilde{c}^{h})>u^{h}(c^{h})$, $h\geq1$, and $\{c^{h}\}_{h\geq1}$ is not Pareto optimal.

Now consider the $L>2$ case. Let $\{c^{h}\}_{h\geq1}\in\mathcal{W}$, $p\in\mathbb{R}^{\infty}_{++}$ a supporting price sequence, and $\sum^{\infty}_{t=1}1/(\Vert p_{t}\Vert\sum_{h\in G_{t}}\Vert c^{h}_{t}\Vert)<\infty$. Notice that, for all $t\geq1$, there is $1\leq l(t)\leq L/2$ such that
\begin{eqnarray}\label{eqSupNorm}
    \Vert p_{t} \Vert_{\infty}=p_{tl(t)}\geq \frac{2\Vert p_{t}\Vert}{L}.
\end{eqnarray}
I proceed with the following pointwise assumption.
\begin{assumption}[Pointwise]\label{assGoodsFraction}
    Let $\{c^{h}\}_{h\geq1}$ be an allocation. Then,
    \begin{eqnarray*}
        \inf_{t\geq1,\, 1\leq j \leq L/2}\frac{\sum_{h\in G_{t}}c^{h}_{tj}}{\sum_{h\in G_{t}}\Vert c^{h}_{t}\Vert}>0.
    \end{eqnarray*}
\end{assumption}
Assumption \ref{assGoodsFraction} requires that all goods in the younger periods of life represent a minimum fraction of the total existing goods. Under Assumption \ref{assGoodsFraction}, (\ref{eqSupNorm}) allows us to write that
\begin{eqnarray*}
    \sum^{\infty}_{t=1}\frac{1}{\Vert p_{t}\Vert\sum_{h\in G_{t}}\Vert c^{h}_{t}\Vert}<\infty\implies \sum^{\infty}_{t=1}\frac{1}{ p_{tl(t)}\sum_{h\in G_{t}} c^{h}_{tl(t)}}<\infty.
\end{eqnarray*}
Then, we can define a 2-dimensional overlapping generations economy $\tilde{\mathcal{E}}$ through the following ``restricted'' utility functions $\tilde{u}^{h}:\mathbb{R}^{2}_{+}\rightarrow\mathbb{R}$, $h\in G_{t}$, $t\geq1$, as
\begin{eqnarray*}
    \tilde{u}^{h}(y_{t},y_{t+1})=u^{h}(c^{h}_{t1},\ldots, y_{t},\ldots, y^{t+1},\ldots,c^{h}_{(t+1)L/2}),
\end{eqnarray*}
where the arguments $y_{t}$ and $y_{t+1}$ enter the $l(t)$ and $l(t+1)$ coordinates of the corresponding $L/2$-dimensional consumption vectors. Finally, if Assumptions \ref{assEndowmentGrowth}--\ref{assSupConsumptionMarginalRates} are valid for this lower dimensional economy $\tilde{\mathcal{E}}$, we arrive at the following general statement for the necessity of the Cass criterion.
\begin{theorem}\label{theoCassNecessity}
    Let $\mathcal{E}$, $\{c^{h}\}_{h\geq1}\in\mathcal{W}$, $p\in\mathbb{R}^{\infty}_{++}$ be a supporting price sequence, $l(\cdot)$ and $\tilde{\mathcal{E}}$ be as defined above. Under Assumptions \ref{assGoodsFraction} and \ref{assEndowmentGrowth}--\ref{assSupConsumptionMarginalRates} for $\mathcal{E}$ and $\tilde{\mathcal{E}}$, respectively, if $\sum^{\infty}_{t=1}1/(\Vert p_{t}\Vert \sum_{h\in G_{t}}\Vert c^{h}_{t}\Vert)<\infty$, then $\{c^{h}\}_{h\geq1}\notin \mathcal{P}$. 
\end{theorem}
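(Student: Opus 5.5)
The proof reduces the $L>2$ case to the two-good argument carried out just before the statement. The strategy is to perturb each household only in the coordinates $l(t)$ and $l(t+1)$ selected in (\ref{eqSupNorm}), keeping all other coordinates of $c^{h}$ fixed. This turns the question into one about the restricted economy $\tilde{\mathcal{E}}$.

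\textbf{Step 1 (the restricted allocation is weakly Pareto optimal).} Check that the restricted allocation $\{(c^{h}_{tl(t)},c^{h}_{(t+1)l(t+1)})\}_{h\in G_{t}}$, together with $\{c^{h}_{1l(1)}\}_{h\in G_{0}}$, lies in the set $\mathcal{W}$ of $\tilde{\mathcal{E}}$, with supporting prices $\tilde{p}_{t}=p_{tl(t)}$. The first-order conditions defining $\mathcal{W}$ say that $\nabla u^{h}(c^{h})$ is proportional to $(p_{t},p_{t+1})$. The gradient of $\tilde{u}^{h}$ at the restricted bundle is the subvector of $\nabla u^{h}(c^{h})$ in coordinates $l(t)$ and $(t+1)$-period coordinate $l(t+1)$, so it is proportional to $(p_{tl(t)},p_{(t+1)l(t+1)})$. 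Hence $\tilde{p}$ supports the restricted allocation, and $\tilde u^h$ inherits smoothness, strict monotonicity and strict quasiconcavity.

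\textbf{Step 2 (the two-good Cass condition holds in $\tilde{\mathcal{E}}$).} Combine $p_{tl(t)}\geq 2\Vert p_{t}\Vert/L$ with Assumption \ref{assGoodsFraction}, which gives $\sum_{h\in G_{t}}c^{h}_{tl(t)}\geq\kappa\sum_{h\in G_{t}}\Vert c^{h}_{t}\Vert$ for some $\kappa>0$. Together these yield
\[
\sum_{t\geq1}\frac{1}{\tilde{p}_{t}\sum_{h\in G_t}c^{h}_{tl(t)}}\leq\frac{L}{2\kappa}\sum_{t\geq1}\frac{1}{\Vert p_t\Vert\sum_{h\in G_t}\Vert c^h_t\Vert}<\infty.
\]
This is exactly the hypothesis of the $L=2$ argument preceding the theorem.

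\textbf{Step 3 (apply the two-good construction).} Since Assumptions \ref{assEndowmentGrowth}--\ref{assSupConsumptionMarginalRates} hold for $\tilde{\mathcal{E}}$, rerun the $L=2$ argument: define $\delta_{t}$ by (\ref{eqDeltaDef}) and split $\delta_t$ proportionally via $\lambda^{h}$. Assumption \ref{assUniformEpsilon2} gives (\ref{eqKBound}), and Assumption \ref{assBallEpsilon} with the choice (\ref{eqLastBound}) produces perturbations $\tilde{y}^{h}$ with $\tilde{u}^{h}(\tilde{y}^{h})>\tilde{u}^{h}(y^{h})$ for every $h\in G_{t}$, $t\geq1$. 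The construction satisfies resource balance in good $l(t)$ of each period: the young give up $\delta_{t}$ and the old receive $\delta_{t}$. Generation $G_{0}$ receives $\delta_{1}>0$ of good $l(1)$, so by strict monotonicity it is strictly better off.

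\textbf{Step 4 (lift back to $\mathcal{E}$).} Embed each $\tilde{y}^{h}$ into $\mathbb{R}^{L}$ by replacing coordinates $l(t)$ and $(t+1)$-coordinate $l(t+1)$ of $c^{h}$, leaving the rest unchanged. The lifted allocation satisfies the feasibility identity in $\mathcal{P}$ coordinatewise, since untouched coordinates have zero net change. It strictly improves everyone, because $u^{h}$ evaluated at the lift equals $\tilde{u}^{h}(\tilde{y}^{h})$. Hence $\{c^{h}\}\notin\mathcal{P}$.

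The main obstacle is the bookkeeping in Steps 1 and 3. The $L=2$ derivation is stated for scalar consumptions and for the genuine economy. One must make sure it uses only the properties of $\tilde{u}^{h}$ at the restricted point, and that the assumptions are imposed on $\tilde{\mathcal{E}}$ rather than $\mathcal{E}$; in particular the deep projections $\tilde\pi_i^h$ and maps $\tilde f^h$ must be those of $\tilde{u}^{h}$. I would first verify explicitly that every inequality in the $L=2$ argument refers only to $\tilde{p}_{t}$, $c^{h}_{tl(t)}$ and $\tilde u^h$, and then quote it.
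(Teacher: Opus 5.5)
Your proposal is correct and follows the paper's route. Like the paper, you bound $\sum_t 1/\bigl(p_{tl(t)}\sum_{h\in G_t}c^h_{tl(t)}\bigr)$ using (\ref{eqSupNorm}) and Assumption \ref{assGoodsFraction}, then run the two-good construction in the restricted economy $\tilde{\mathcal{E}}$; your Steps 1 and 4 (that $\tilde p_t=p_{tl(t)}$ supports the restricted allocation, and that the improvement lifts back to $\mathcal{E}$ with coordinatewise resource balance) just make explicit what the paper leaves implicit.
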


%--- Section ---%
\section{Concluding remarks}\label{sec4}

This paper introduces deep projections as a tool for consumer theory to measure the local behavior of the indifference and offer hypersurfaces, and shows how they are related to the efficiency of weakly Pareto optimal allocations in overlapping generations economies. In particular, the proof of the general statements for the Cass criterion reveals its local nature, since no global information about the indifference hypersurfaces is needed.

Furthermore, these deep projections serve as an auxiliary and tractable method for measuring the curvature of the indifference and offer hypersurfaces, a method that this paper has shown to be possibly more convenient for economic applications than the traditional Gaussian curvature. 

%-------------------------------------------
% Appendix
%-------------------------------------------
\appendix
\section*{Appendix}\label{appx}

All proofs are stated in this \hyperref[appx]{Appendix}.

\begin{proof}[Proof of Theorem \ref{theoAwesome}]
Definition \ref{defDeepProjection} implies $\eta(p)=0$. Also,
\begin{eqnarray}\label{eqAuxA1}
    \nabla \eta(q)=p\mathbf{J}_{1}h(q,v_{n}(p))=p\mathbf{J}_{1}\nabla_{1}e(q,v_{n}(p))=p\mathbf{H}_{11}e(q,v_{n}(p)),
\end{eqnarray}
where the previous to last equality is due to Lemma \ref{propBasicIdentities}. Then, Lemma \ref{propBasicIdentities} allows us to conclude that $\nabla \eta(p)=p\mathbf{H}_{11}e(p,v_{n}(p))=0$. Since $q\mathbf{H}_{11}e(q,v_{n}(p))=0$, $q\in\mathbb{R}^{L}_{++}$, we can differentiate this expression to obtain
\begin{eqnarray*}
    \mathbf{J}(q\mathbf{H}_{11}e(q,v_{n}(p)))=\mathbf{H}_{11}e(q,v_{n}(p))+\sum^{L}_{i=1}q_{i}\mathbf{H}_{11}\partial_{i}e(q,v_{n}(p))=0,
\end{eqnarray*}
with the operator $\partial_{i}$, $1\leq i\leq L$, denoting the partial derivative relative to the $i$-th price coordinate. Therefore, taking $q=p$, we obtain
\begin{eqnarray}\label{eqHessianExpIdentity}
    \sum^{L}_{i=1}p_{i}\mathbf{H}_{11}\partial_{i}e(p,v_{n}(p))=-\mathbf{H}_{11}e(p,v_{n}(p)).
\end{eqnarray}
Then, differentiating (\ref{eqAuxA1}),
\begin{eqnarray*}
    \mathbf{H}\eta(q)=\mathbf{J}\nabla \eta(q)=\mathbf{J}\biggr(\sum^{L}_{i=1}p_{i}\nabla_{1}\partial_{i}e(q,v_{n}(p))\biggr)=\sum^{L}_{i=1}p_{i}\mathbf{H}_{11}\partial_{i}e(q,v_{n}(p)).
\end{eqnarray*}
for $q\in\mathbb{R}^{L}_{++}$. We conclude through (\ref{eqHessianExpIdentity}) that $\mathbf{H}\eta(p)=-\mathbf{H}_{11}e(p,v_{n}(p))$.

Next, Lemma \ref{propBasicIdentities} implies $p\cdot x_{n}(p)=1$, so that $\omega(p)=0$. It is convenient to define $m:\mathbb{R}^{L}_{++}\rightarrow\mathbb{R}^{L}_{++}$ as $m(q)=q/q\cdot x_{n}(p)$, $q\in\mathbb{R}^{L}_{++}$, so that $\omega(q)=p\cdot(x_{n}(m(q))-x_{n}(p))$. Lemma \ref{propBasicIdentities} implies $m(p)=p$ and we have
\begin{eqnarray}\label{eqAuxA2}
    \mathbf{J}m(q)=\frac{1}{q\cdot x_{n}(p)}\biggr(Id - m(q)^{T}x_{n}(p)\biggr),
\end{eqnarray}
for $q\in\mathbb{R}^{L}_{++}$. Lemma \ref{propBasicIdentities} also allows us to write $ \omega(q)=p\cdot(\nabla_{1}e(q,v_{n}(m(q)))-x_{n}(p))$, $q\in\mathbb{R}^{L}_{++}$, so that
\begin{eqnarray}\label{eqAuxA3}
    \nabla \omega(q)&=& p\cdot\mathbf{J}\nabla_{1} e(q,v_{n}(m(q)))\begin{bmatrix}
        Id\\
        \nabla v_{n}(m(q))\mathbf{J}m(q)
    \end{bmatrix}\nonumber\\
    &=& p\cdot\begin{bmatrix}
        \mathbf{H}_{11} e(q,v_{n}(m(q))) & \mathbf{H}_{21}e(q,v_{n}(m(q)))
    \end{bmatrix}\begin{bmatrix}
        Id\\
        \nabla v_{n}(m(q))\mathbf{J}m(q)
    \end{bmatrix}\nonumber\\
    &=&p\cdot(\mathbf{H}_{11} e(q,v_{n}(m(q)))+\mathbf{H}_{21}e(q,v_{n}(m(q)))^{T}\nabla v_{n}(m(q))\mathbf{J}m(q)).
\end{eqnarray}
Notice that Lemma \ref{propBasicIdentities} and (\ref{eqAuxA2}) imply
\begin{eqnarray*}
    \nabla v_{n}(m(q))\mathbf{J}m(q)&=&-\frac{\lambda_{n}(m(q))x_{n}(m(q))}{q\cdot x_{n}(p)}\biggr(Id - m(q)^{T}x_{n}(p)\biggr)\\
    &=&-\frac{\lambda_{n}(m(q))}{q\cdot x_{n}(p)}\biggr(x_{n}(m(q)) - x_{n}(p)\biggr).
\end{eqnarray*}
for $q\in\mathbb{R}^{L}_{++}$. Since $m(p)=p$, we have 
\begin{eqnarray}\label{eqAuxA4}
 \nabla v_{n}(m(p))\mathbf{J}m(p)=0,   
\end{eqnarray}
and, therefore, Lemma \ref{propBasicIdentities} and (\ref{eqAuxA3}) imply $\nabla \omega(p)=0$. It is convenient to define the following auxiliary functions $g_{1}:\mathbb{R}^{L}_{++}\rightarrow\mathbb{R}^{L}$, $g_{2}:\mathbb{R}^{L}_{++}\rightarrow\mathbb{R}^{L}$, given by
\begin{eqnarray*}
    g_{1}(q)
    % &=&p\cdot \mathbf{H}_{21}e(q,v_{n}(f(q)))\nabla v_{n}(f(q))\mathbf{J}f(q)\\
    &=&-\frac{p\cdot \mathbf{H}_{21}e(q,v_{n}(m(q)))\lambda_{n}(m(q))}{q\cdot x_{n}(p)}(\nabla_{1}e(q,v_{n}(m(q))) - x_{n}(p))\\
    g_{2}(q)
    % &=&p\cdot \mathbf{H}_{11}e(q,v_{n}(m(q)))\\
    &=&\sum^{L}_{i=1}p_{i}\nabla_{1}\partial_{i}e(q,v_{n}(m(q))),
\end{eqnarray*}
so that $\nabla \omega(q)=g_{1}(q)+g_{2}(q)$, $q\in\mathbb{R}^{L}_{++}$. Notice that 
\begin{eqnarray*}
    \mathbf{J}g_{1}(q)&=&-(\nabla_{1}e(q,v_{n}(m(q))) - x_{n}(p))^{T}\nabla \biggr(\frac{p\cdot \mathbf{H}_{21}e(q,v_{n}(m(q)))\lambda_{n}(m(q))}{q\cdot x_{n}(p)}\biggr) - \ldots
    \\&&\ldots \frac{p\cdot \mathbf{H}_{21}e(q,v_{n}(m(q)))\lambda_{n}(m(q))}{q\cdot x_{n}(p)} \mathbf{J}\nabla_{1}e(q,v_{n}(m(q)))\begin{bmatrix}
        Id\\
        \nabla v_{n}(m(q))\mathbf{J}m(q)
    \end{bmatrix}\\
    &=&-(\nabla_{1}e(q,v_{n}(m(q))) - x_{n}(p))^{T}\nabla \biggr(\frac{p\cdot \mathbf{H}_{21}e(q,v_{n}(m(q)))\lambda_{n}(m(q))}{q\cdot x_{n}(p)}\biggr) - \ldots\\
    &&\ldots \frac{p\cdot \mathbf{H}_{21}e(q,v_{n}(m(q)))\lambda_{n}(m(q))}{q\cdot x_{n}(p)}(\mathbf{H}_{11} e(q,v_{n}(m(q)))+\ldots\\
    &&\ldots \mathbf{H}_{21}e(q,v_{n}(m(q)))^{T}\nabla v_{n}(m(q))\mathbf{J}m(q)),
\end{eqnarray*}
for $q\in\mathbb{R}^{L}_{++}$. Therefore, Lemma \ref{propBasicIdentities} and (\ref{eqAuxA4}) imply
\begin{eqnarray}\label{eqAuxA5}
    \mathbf{J}g_{1}(p)=-p\cdot \mathbf{H}_{21}e(p,v_{n}(p))\lambda_{n}(p)\mathbf{H}_{11} e(p,v_{n}(p)).
\end{eqnarray}
Next,
\begin{eqnarray*}
    \mathbf{J}g_{2}(p)&=&\sum^{L}_{i=1}p_{i}\mathbf{J}\nabla_{1}\partial_{i}e(q,v_{n}(m(q)))\begin{bmatrix}
        Id\\
        \nabla v_{n}(m(q))\mathbf{J}m(q)
    \end{bmatrix}\\
    &=&\sum^{L}_{i=1}p_{i}(\mathbf{H}_{11} \partial_{i}e(q,v_{n}(m(q)))+\mathbf{H}_{21}\partial_{i}e(q,v_{n}(m(q)))^{T}\nabla v_{n}(m(q))\mathbf{J}m(q)).
\end{eqnarray*}
Therefore, (\ref{eqHessianExpIdentity}) and (\ref{eqAuxA4}) imply
\begin{eqnarray}\label{eqAuxA6}
    \mathbf{J}g_{2}(p)=-\mathbf{H}_{11} e(p,v_{n}(p)).
\end{eqnarray}
Finally, (\ref{eqAuxA5}) and (\ref{eqAuxA6}) imply
\begin{eqnarray*}
    \mathbf{H}\omega(p)=\mathbf{J}g_{1}(p)+\mathbf{J}g_{2}(p)=-(1+\lambda_{n}(p) p\cdot \mathbf{H}_{21}e(p,v_{n}(p)))\mathbf{H}_{11} e(p,v_{n}(p)),
\end{eqnarray*}
and Lemma \ref{propBasicIdentities} allows us to conclude that
\begin{eqnarray*}
    \mathbf{H}\omega(p)&=&-\biggr(1+\lambda_{n}(p) p\cdot \frac{\partial h(p,v_{n}(p))}{\partial u}\biggr)\mathbf{H}_{11}e(p,v_{n}(p))\\
    &=&-\biggr(1+\lambda_{n}(p) \frac{\partial e(p,v_{n}(p))}{\partial u}\biggr)\mathbf{H}_{11}e(p,v_{n}(p)).
\end{eqnarray*}
\end{proof}

\begin{proof}[Proof of Corollary~{\upshape\ref{corDeepProj}}]
Since $p=x_{n}^{-1}(c)$, we have $\pi_{i}(q)=\eta(q,1)/\Vert p\Vert$ and $\pi_{o}(q)=\omega(q,1)/\Vert p\Vert $, $q\in\mathbb{R}^{L-1}_{++}$. Since $\eta(\cdot)$ and $\omega(\cdot)$ are homogeneous of degree zero, we can write $\pi_{i}(q)=\eta(p_{L}q,p_{L})/\Vert p\Vert $ and $\pi_{o}(q)=\omega(p_{L}q,p_{L})/\Vert p\Vert $, $q\in\mathbb{R}^{L-1}_{++}$. Notice also that
\begin{eqnarray*}
    p=x^{-1}_{n}(c)=\frac{\nabla u(c)}{\nabla u(c)\cdot c}\implies p=(p_{L}f(c),p_{L}).
\end{eqnarray*}
Therefore, Theorem \ref{theoAwesome} implies $\pi_{i}(f(c))=\eta(p_{L}f(c),p_{L})/\Vert p\Vert =\eta(p)/\Vert p\Vert =0$ and $\pi_{o}(f(c))=\omega(p_{L}f(c),p_{L})/\Vert p\Vert =\eta(p)/\Vert p\Vert =0$. Next, $\nabla \pi_{i}(q)= \nabla\eta(p_{L}q,p_{L})M^{T}p_{L}/\Vert p\Vert $ and $\nabla \pi_{o}(q)=\nabla\omega(p_{L}q,p_{L})M^{T}p_{L}/\Vert p\Vert $, $q\in\mathbb{R}^{L-1}_{++}$, so that Theorem \ref{theoAwesome} implies $\nabla \pi_{i}(f(c))=0$ and $\nabla \pi_{o}(f(c))=0$. Finally, we have
\begin{eqnarray*}
    \mathbf{H}\pi_{i}(q)&=&\frac{1}{\Vert p\Vert} M \mathbf{H}\eta(p_{L}q,p_{L}) M^{T}p_{L}^{2}\\
    \mathbf{H}\pi_{o}(q)&=&\frac{1}{\Vert p\Vert} M \mathbf{H}\omega(p_{L}q,p_{L}) M^{T}p_{L}^{2},
\end{eqnarray*}
and Theorem \ref{theoAwesome} allows us to conclude that
\begin{eqnarray*}
    \mathbf{H}\pi_{i}(f(c))&=&-\frac{p_{L}^{2}}{\Vert p\Vert}M \mathbf{H}_{11}e(p,v_{n}(p)) M^{T}\\
    \mathbf{H}\pi_{o}(f(c))&=&-\biggr(1+\lambda_{n}(p)\frac{\partial e(p,v_{n}(p))}{\partial u}\biggr)\frac{p_{L}^{2}}{\Vert p\Vert}M \mathbf{H}_{11}e(p,u(c)) M^{T}.
\end{eqnarray*}
\end{proof}

\begin{proof}[Proof of Proposition~{\upshape\ref{propBoundsHessian}}]
Corollary \ref{corDeepProj} allows us to write the following Taylor expansion of $\pi_{i}(\cdot)$ around $f(c)\in\mathbb{R}^{L-1}_{++}$
\begin{eqnarray}\label{eqAux1BoundsHessian}
    \pi_{i}(q)=-\frac{1}{2}\frac{p_{L}^{2}}{\Vert p\Vert}(q-f(c))M\mathbf{H}_{11}e(p,v_{n}(p))M^{T}(q-f(c))^{T}+\Vert q-f(c)\Vert^{2}o(1)
\end{eqnarray}
for $p=x_{n}^{-1}(c)$. If $y\in \mathbf{ker}\,M\mathbf{H}_{11}e(p,v_{n}(p))M^{T}$, then $yM\in\mathbf{ker}\,\mathbf{H}_{11}e(p,v_{n}(p))$. Since $\mathbf{dim}\,\mathbf{ker}\,\mathbf{H}_{11}e(p,v_{n}(p))=1$, Proposition \ref{propBasicIdentities} implies $\mathbf{ker}\,\mathbf{H}_{11}e(p,v_{n}(p))=\mathbf{span}\{p\}$ and, therefore, $yM=\alpha p$, $\alpha\in\mathbb{R}$. Since $p\in\mathbb{R}^{L}_{++}$, $y=0$. We conclude that $M\mathbf{H}_{11}e(p,v_{n}(p))M^{T}\in\mathbb{R}^{(L-1)\times(L-1)}$ is a symmetric negative definite matrix.

Next, let $0<\mu_{\min}(p)\leq\mu_{\max}(p)$ be the smallest and the greatest eigenvalues of $-M\mathbf{H}_{11}e(p,v_{n}(p))M^{T}$. Then, (\ref{eqAux1BoundsHessian}) implies that there is $\varepsilon>0$ such that
\begin{eqnarray*}
    \frac{p_{L}^{2}}{\Vert p\Vert}\Vert q -f(c)\Vert^{2}\frac{\mu_{\min}(p)}{4}\leq \pi_{i}(q)\leq\frac{p_{L}^{2}}{\Vert p\Vert}\Vert q -f(c)\Vert^{2}\mu_{\max}(p)
\end{eqnarray*}
for $q\in B(f(c),\varepsilon)$. The same reasoning can be applied to $\pi_{o}(\cdot)$.
\end{proof}

\begin{proof}[Proof of Lemma~{\upshape\ref{lemmaLowerBoundHessianUtility}}]
I prove by contradiction. If the claim is false, for a given $K>0$, there is a sequence $\{\tilde{c}_{n}\}_{n\geq1}$, such that: $\tilde{c}_{n}\neq c$, $u(\tilde{c}_{n})=u(c)$, $n\geq1$; $\lim_{n\rightarrow\infty}\tilde{c}_{n}=c$; $\lim_{n\rightarrow\infty} y_{n}=y$, for $y_{n}=(\tilde{c}_{n}-c)/\Vert \tilde{c}_{n}-c\Vert$, $n\geq1$; and,
\begin{eqnarray*}
    \lim_{n\rightarrow\infty}\biggr\Vert\frac{(\tilde{c}_{n}-c)}{\Vert \tilde{c}_{n}-c\Vert}\mathbf{H}u(c)(I-e_{L}^{T}(f(c),1))\biggr\Vert^{2}=\Vert y\mathbf{H}u(c)(I-e_{L}^{T}(f(c),1))\Vert^{2}=0.
\end{eqnarray*}
First, notice that $\Vert y\Vert=1$, so that $y\neq0$, and 
\begin{eqnarray*}
    0=\frac{u(\tilde{c}_{n})-u(c)}{\Vert \tilde{c}_{n}-c\Vert}=\nabla u(c)\cdot \frac{\tilde{c}_{n}-c}{\Vert \tilde{c}_{n}-c\Vert}+ o(1).
\end{eqnarray*}
Therefore, by taking the limit we conclude that $y\in\mathbf{span}\{\nabla u(c)\}^{\perp}$. Next, notice that
\begin{eqnarray*}
    y\mathbf{H}u(c)(I-e_{L}^{T}(f(c),1))=0\implies y\mathbf{H}u(c)=\beta\nabla u(c),
\end{eqnarray*}
for some $\beta\in\mathbb{R}$. Then, $g^{\prime}(0)=\nabla u(c)\cdot y=0$ and $g^{\prime\prime}(0)=y\mathbf{H}u(c)y^{T}=\beta \nabla u(c)\cdot y=0$, and the Taylor expansion becomes
\begin{eqnarray*}
    u(c+ty)-u(c)=g^{\prime\prime\prime}(0)t^{3}+t^{3}o(1),
\end{eqnarray*}
with $g^{\prime\prime\prime}(0)\neq0$ by assumption of the theorem. Therefore, if $g^{\prime\prime\prime}(0)>0$ ($g^{\prime\prime\prime}(0)<0$), then $u(c+ty)>u(c)$, for $t>0$ ($t<0$) sufficiently small. Absurd, since $y\in \mathbf{span}\{\nabla u(c)\}^{\perp}$ and the indifference sets are strictly concave. 
\end{proof}

\begin{proof}[Proof of Theorem~{\upshape\ref{theoFundamentalIneq}}]
Let $K>0$, $c,\tilde{c}\in\mathbb{R}^{L}_{++}$, $c\neq\tilde{c}$, $u(c)=u(\tilde{c})$, and $c=x_{n}(p)$, $p\in\mathbb{R}^{L}_{++}$. Proposition \ref{propBoundsHessian} and the continuity of $f(\cdot)$ implies that there is $\varepsilon>0$ such that if $\tilde{c}\in B(c,\varepsilon K)$, then 
\begin{eqnarray*}
    \pi_{i}(f(\tilde{c}))\geq \frac{p_{L}^{2}}{\Vert p\Vert}\frac{\mu_{\min}(p)}{2}\Vert f(\tilde{c}) -f(c)\Vert^{2}.
\end{eqnarray*}
Define the auxiliary function $m:\mathbb{R}^{L}_{++}\rightarrow\mathbb{R}$, $m(x)=\Vert f(x)-f(c) \Vert^{2}$. Then, $ \nabla m(x)=2(f(x)-f(c))\mathbf{J}f(x)$, for $x\in\mathbb{R}^{L}_{++}$, and $\mathbf{H} m(c)=2\mathbf{J}f(c)^{T}\mathbf{J}f(c)$. 

Next, notice that
 \begin{eqnarray*}
     f(x)=\frac{\nabla u(x)M^{T}}{\nabla u(x)\cdot e_{L}},
 \end{eqnarray*}
for $x\in\mathbb{R}^{L}_{++}$, where $M\in\mathbb{R}^{(L-1)\times L}$, $M=\begin{bmatrix}
     I & 0
 \end{bmatrix}$, and $e_{L}=(0,\ldots,0,1)\in\mathbb{R}^{L}$. Then,
 \begin{eqnarray*}
     \mathbf{J}f(x)&=&\frac{\nabla u(x)\cdot e_{L} M\mathbf{H} u(x)-M\nabla u(c)^{T} e_{L}\mathbf{H} u(x)}{(\nabla u(x)\cdot e_{L})^{2}}\\
     &=&\frac{1}{(\nabla u(x)\cdot e_{L})^{2}}M
     \biggr(\nabla u(x)\cdot e_{L} I-\nabla u(x)^{T}e_{L}^{T}\biggr)\mathbf{H} u(x)\\
     &=&\frac{1}{\nabla u(x)\cdot e_{L}}M\biggr(I-(f(x),1)^{T}e_{L}^{T}\biggr)\mathbf{H} u(x),
 \end{eqnarray*}
 and we can write
\begin{eqnarray*}
    \mathbf{J}f(c)^{T}\mathbf{J}f(c)=\frac{1}{(\nabla u(c)\cdot e_{L})^{2}}\mathbf{H}u(c)\biggr(I-(f(c),1)^{T}e_{L}\biggr)^{T}\biggr(I-(f(c),1)^{T}e_{L}\biggr)\mathbf{H}u(c).
\end{eqnarray*}
Therefore,
\begin{eqnarray*}
    \Vert f(x)-f(c) \Vert^{2}=\frac{\Vert x-c\Vert^{2}}{(\nabla u(c)\cdot e_{L})^{2}} \biggr\Vert\frac{(x-c)}{\Vert x-c\Vert}\mathbf{H}u(c)(I-e_{L}^{T}(f(c),1))\biggr\Vert^{2}+\Vert x-c\Vert^{2}o(1).
\end{eqnarray*}
for $x\in\mathbb{R}^{L}_{++}$. Next, by reducing $\varepsilon>0$ if necessary, Lemma \ref{lemmaLowerBoundHessianUtility} implies that
\begin{eqnarray*}
    \biggr\Vert\frac{(\tilde{c}-c)}{\Vert \tilde{c}-c\Vert}\mathbf{H}u(c)(I-e_{L}^{T}(f(c),1))\biggr\Vert^{2}>\varepsilon,
\end{eqnarray*}
for all $\tilde{c}\in B(c,\varepsilon K)\subset\mathbb{R}^{L}_{++}$ with $u(\tilde{c})=u(c)$. Then, it is possible to further reduce $\varepsilon>0$, if necessary, such that 
\begin{eqnarray*}
    \Vert f(\tilde{c})-f(c) \Vert^{2}\geq \frac{\varepsilon}{2(\nabla u(c)\cdot e_{L})^{2}}\Vert \tilde{c}-c\Vert^{2}
\end{eqnarray*}
for all $\tilde{c}\in B(c,\varepsilon K)\subset\mathbb{R}^{L}_{++}$ with $u(\tilde{c})=u(c)$, and, therefore, 
\begin{eqnarray*}
    \pi_{i}(f(\tilde{c}))\geq \frac{p_{L}^{2}}{\Vert p\Vert}\frac{\mu_{\min}(p)\varepsilon}{4(\nabla u(c)\cdot e_{L})^{2}}\Vert \tilde{c}-c\Vert^{2}=\frac{\mu_{\min}(p)\varepsilon}{4\Vert p\Vert\lambda_{n}(p)^{2}}\Vert \tilde{c}-c\Vert^{2},
\end{eqnarray*}
where the last equality is due to $p=x^{-1}_{n}(c)=\nabla u(c)/\nabla u(c)\cdot c$ and $\lambda_{n}(p)=\nabla u(c)\cdot c$ from Lemma \ref{propBasicIdentities}. Finally, let $\tilde{\varepsilon}=\min\{\varepsilon,\mu_{\min}(p)\varepsilon/4(\nabla u(c)\cdot e_{L})^{2}\}$, and notice that for $\tilde{c}\in B(c,\tilde{\varepsilon}K)\subseteq B(c,\varepsilon K)$, $u(\tilde{c})=u(c)$, we have
\begin{eqnarray*}
    \pi_{i}(f(\tilde{c}))\geq\frac{\mu_{\min}(p)\varepsilon}{4\Vert p\Vert\lambda_{n}(p)^{2}}\Vert \tilde{c}-c\Vert^{2}\geq \tilde{\varepsilon}\Vert\tilde{c}-c\Vert^{2}.
\end{eqnarray*}
\end{proof}

\begin{proof}[Proof of Theorem~{\upshape\ref{theoUpperBoundCass}}]
    Let $K>0$. Proposition \ref{propBoundsHessian} implies that there is $\varepsilon_{1}>0$ such that $q\in B(f(c),\varepsilon_{1})$ implies $\pi_{i}(q)\leq p_{L}^{2}\mu_{\max}(p)\Vert q-f(c)\Vert^{2}/\Vert p\Vert$. Since $f(\cdot)$ is continuous, there is $\varepsilon>0$ such that $\tilde{c}\in B(c,\varepsilon K)$ implies $f(\tilde{c})\in B(f(c),\varepsilon_{1})$. Therefore, $\tilde{c}\in B(c,\varepsilon K)$ and $u(\tilde{c})=u(c)$ imply
    \begin{eqnarray*}
        \pi(f(\tilde{c}))\leq \frac{p_{L}^{2}\mu_{\max}(p)}{\Vert p\Vert}\Vert f(\tilde{c})-f(c)\Vert^{2}=\frac{p_{L}^{2}\mu_{\max}(p)}{\Vert p\Vert}\biggr(\frac{\tilde{c}_{1}}{\tilde{c}_{2}}-\frac{c_{1}}{c_{2}}\biggr)^{2}.
    \end{eqnarray*}
    Let $\tilde{c}-c=\delta\in\mathbb{R}^{2}$, with $\delta_{1}<0$ and, therefore, $\delta_{2}>0$. Since $p_{1}\delta_{1}+p_{2}\delta_{2}>0$, we have $\vert\delta_{1}\vert<p_{2}\delta_{2}/p_{1}$. Also, we can take $\varepsilon>0$ sufficiently small so that $\vert \delta_{2}\vert<c_{2}/2$, and this allows us to conclude that
    \begin{eqnarray*}
        \pi(f(\tilde{c}))\leq\frac{p_{L}^{2}\mu_{\max}(p)}{\Vert p\Vert}\biggr(\frac{\delta_{1}c_{2}-\delta_{2}c_{1}}{c_{2}(c_{2}+\delta_{2})}\biggr)^{2}\leq \frac{p_{L}^{2}\mu_{\max}(p)}{\Vert p\Vert}\biggr(\frac{p_{2}c_{2}/p_{1}+c_{1}}{c_{2}^{2}/2}\biggr)^{2}\delta_{2}^{2}.
    \end{eqnarray*}
    Finally, let 
    \begin{eqnarray*}
        \tilde{\varepsilon}=\min\biggr\{\varepsilon,\frac{\Vert p\Vert}{p_{L}^{2}\mu_{\max}(p)}\biggr(\frac{p_{2}c_{2}/p_{1}+c_{1}}{c_{2}^{2}/2}\biggr)^{-2}\biggr\},
    \end{eqnarray*}
    so that $\tilde{c}\in B(c,\tilde{\varepsilon}K)\subseteq B(c,\varepsilon K)$, $u(\tilde{c})=u(c)$ and $\tilde{c}_{1}<c_{1}$ imply
    \begin{eqnarray*}
        \pi(f(\tilde{c}))\leq \frac{p_{L}^{2}\mu_{\max}(p)}{\Vert p\Vert}\biggr(\frac{p_{2}c_{2}/p_{1}+c_{1}}{c_{2}^{2}/2}\biggr)^{2}\delta_{2}^{2}\leq \frac{(\tilde{c}_{2}-c_{2})^{2}}{\tilde{\varepsilon}}.
    \end{eqnarray*}
\end{proof}

\begin{proof}[Proof of Lemma~{\upshape\ref{lemmaInclined}}]
    I prove by contradiction. If the claim is false, there is a sequence $\{\tilde{c}_{n}\}_{n\geq1}$ such that: $u(\tilde{c}_{n})\leq u(c)$; $\tilde{c}_{n1}\leq c_{1}$; $\tilde{c}_{n2}-c_{2}\geq n(c_{1}-\tilde{c}_{n1})$; $p\cdot(\tilde{c}_{n}-c)>0$; and $\lim_{n\rightarrow\infty}\tilde{c}_{n}=c$. Let $y_{n}=(\tilde{c}_{n}-c)/\Vert \tilde{c}_{n}-c\Vert$, $n\geq1$, and assume, without loss of generality, $\lim_{n\rightarrow\infty}y_{n}=y$, with $y_{1}\leq0$, $y_{2}\geq0$ and $\Vert y \Vert=1$. Notice that
    \begin{eqnarray}\label{eqAuxIneq}
        0\geq \frac{1}{c\cdot \nabla u(c)} \frac{u(\tilde{c}_{n})- u(c)}{\Vert \tilde{c}_{n}-c\Vert}=p\cdot y_{n}+o(1)\implies p\cdot y\leq0
    \end{eqnarray}
    However, $p\cdot y_{n}\Vert \tilde{c}_{n}-c\Vert>0$, $n\geq1$, so that $p\cdot y\geq0$. Therefore, $p\cdot y=0$. Next, notice that $\Vert y_{n}\Vert=1$, $n\geq1$, allow us to write
    \begin{eqnarray*}
        \frac{1}{\vert y_{n1}\vert}\geq\frac{y_{n2}}{\vert y_{n1}\vert}=\frac{\tilde{c}_{n2}-c_{2}}{c_{1}-\tilde{c}_{n1}}\geq n \implies \lim_{n\rightarrow\infty}y_{n1}=y_{1}=0,
    \end{eqnarray*} 
    and, therefore, $y_{2}=1$. But, then, $p\cdot y=0$ implies $p_{2}=0$, absurd.
\end{proof}

%-------------------------------------------
% References
%-------------------------------------------

% Print bibliography
\printbibliography
\end{document}